\let\footnote=\endnote
\def\sse{\subseteq}
\def\E{\mathbb{E}}
\def\opt{\ensuremath{\mathsf{OPT}}\xspace}
\def\alg{\ensuremath{\mathsf{ALG}}\xspace}
\newtheorem{theorem}{Theorem}[section]
\newtheorem{corollary}{Corollary}[theorem]
\newtheorem{lemma}[theorem]{Lemma}
\newcommand{\ignore}[1]{}
\begin{document}

\title{Approximation Algorithms for the A Priori Traveling Repairman}

\author[1]{Inge Li G{\o}rtz\thanks{inge@dtu.dk}}
\author[2]{Viswanath Nagarajan\thanks{viswa@umich.edu}}
\author[2]{Fatemeh Navidi\thanks{navidi@umich.edu}}
\affil[1]{Technical University of Denmark, DTU Compute, Denmark}
\affil[2]{University of Michigan,  Ann Arbor, MI, USA}

\renewcommand\Authands{ and }
\date{}
\maketitle
\begin{abstract}
We consider the {\em a priori} traveling repairman problem, which is a stochastic version of the classic traveling repairman   problem (also called the traveling deliveryman or minimum latency problem). Given a metric $(V,d)$ with a root $r\in V$, the traveling repairman problem (TRP) involves finding a tour originating from $r$  that minimizes the sum of arrival-times at all vertices. In its {\em a priori} version, we are also given   independent probabilities of  each  vertex being active. We want to find a master  tour $\tau$ originating from $r$ and visiting all vertices. The objective is to  minimize  the  expected sum of arrival-times at all active vertices, when $\tau$ is shortcut over the inactive vertices. 
We obtain the first constant-factor approximation algorithm for {\em a priori} TRP  under non-uniform probabilities. Previously, such a result was only known for uniform probabilities. 
\end{abstract}

\begin{keywords}{Traveling Repairman Problem, A Priori Optimization, Approximation Algorithms}\end{keywords}

\section{Introduction}

Traditional optimization models assume full information on the instances being solved, which is unrealistic in many situations. In order to remedy this limitation,   there has been significant work in the area of optimization under uncertainty, which deals with various ways to model uncertain input. Stochastic optimization is a widely used approach, where one models the input probabilistically.

{\em A priori} optimization~(\cite{bertsimas1990priori}) is an elegant model for stochastic combinatorial optimization, that is particularly useful when one needs to repeatedly solve instances of the same optimization problem. The basic idea here  is to  reduce the computational overhead of solving repeated problem instances by performing suitable pre-processing  using distributional information. More specifically, in an {\em a priori} optimization problem, one is given a  probability distribution $\Pi$  over inputs and the goal is to find a ``master solution'' $\tau$. Then, after observing the random input $A$ (drawn from the distribution $\Pi$), the master solution $\tau$ is modified using a simple rule to obtain a solution $\tau_A$ for that particular input. The objective is to minimize the expected value of the master solution. For a problem with objective function $\phi$, we are interested in:  
$$\min_{\tau:\mbox{master solution}} \quad \E_{A\gets \Pi}\,[\phi(\tau_A)]\; .$$

This paper studies the {\em a priori} traveling repairman problem. The traveling repairman problem (TRP) is a fundamental vehicle routing problem that involves computing a  tour originating from a depot/root  that minimizes the sum of latencies (i.e. the distance  from the root on this tour) at all vertices. The TRP is also known as the traveling deliveryman or minimum latency problem, and has been studied extensively, see e.g. ~\cite{picard1978time}, \cite{FischettiLM93}, \cite{goemans1998improved}. In the {\em a~priori}~TRP, the master solution $\tau$ is a tour visiting all vertices, and for any  random input (i.e. subset $A$ of vertices), the solution $\tau_A$ is simply obtained by visiting the vertices of $A$ in the order given by $\tau$. 

An {\em a priori} solution is advantageous in settings when we  repeatedly solve  instances of the TRP that are drawn from a common distribution. For example, we may need to solve one TRP instance on each day of operations, where the distribution over  instances  is estimated from historical data.  Using  an {\em a priori} solution saves on computation time as we do not have to solve each instance from scratch. Moreover, for vehicle routing problems (VRPs) there are also practical advantages to have a pre-planned master tour, e.g. drivers have familiarity with the  route followed each day. See \cite{SG95}, \cite{CT08}, and \cite{ESU09} for more discussion on the  benefits of a pre-planned VRP solution.

\subsection{Problem Definition.}
\def\lat{\mathsf{LAT}}
\def\elat{\mathsf{ELAT}}

The traveling repairman problem (TRP) is defined on a finite metric 
$(V,d)$ where $V$ is a vertex set and $d: V \times V\rightarrow \mathbb{R}_+$ is a distance function. We assume that the distances are symmetric and satisfy triangle inequality. There is also a designated root vertex $r\in V$. The goal is to find a tour $\tau$ originating from $r$ that  visits all vertices. The {\em  latency} of any vertex $v$ in tour $\tau$ is the length of the path from $r$ to $v$ along  $\tau$. The objective in TRP is to minimize the sum  of latencies of all vertices.

In the {\em a priori} TRP, in addition to the above input we are also given activation probabilities $\{p_v\}_{v\in V}$ at all vertices; we use $\Pi$ to denote this distribution. In this paper, as in most prior works on {\em a priori} optimization, we assume that the input distribution $\Pi$ is independent accross vertices. So the active subset $A\sse V$ contains each vertex $v\in V$ independently with probability $p_v$. A solution to {\em a priori} TRP is a master tour $\tau$ originating from $r$ and visiting all vertices. Given an active subset $A\sse V$, we restrict tour $\tau$ to vertices in $A$ (by shortcutting over $V\setminus A$) to obtain tour $\tau_A$, again originating from $r$.  For each $v\in A$, we use $\lat_{\tau}^A(v)$ to denote the latency of $v$ in tour $\tau_A$. We also use $\lat_{\tau}^A = \sum_{v\in A} \lat_{\tau}^A(v)$ for the total latency under active subset $A\sse V$. The objective is to minimize 
$$\elat_{\tau} = \E_{A\gets \Pi}\left[ \lat_{\tau}^A\right] = \E_{A\gets \Pi}\left[  \sum_{v\in A} \lat_{\tau}^A(v)\right]\;.$$

\subsection{Results.}
Our main result in this note is the first constant-factor approximation for the {\em a priori} TRP. 
\begin{theorem}\label{thm:apTRP-gen}
There is a constant-factor approximation algorithm for the {\em a priori} traveling repairman problem under independent probabilities. 
\end{theorem}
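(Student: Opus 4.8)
The plan is to pass from the expected-latency objective to a family of budgeted coverage problems at geometrically increasing time scales, approximate each of them, and concatenate the resulting rooted paths into a single master tour, in the spirit of the combinatorial algorithms for deterministic minimum latency.

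\textbf{Reformulating the objective.} For any master tour $\tau$, using $\sum_{v\in A}\lat_\tau^A(v)=\int_0^\infty |\{v\in A:\lat_\tau^A(v)>t\}|\,dt$ and taking expectations,
\[
\elat_\tau \;=\; \int_0^\infty\Big(\E[|A|]-\E[N_\tau(t)]\Big)\,dt,
\]
where $N_\tau(t)$ is the number of active vertices that the shortcut tour $\tau_A$ reaches within distance $t$. Restricting $t$ to powers of two loses only a constant, so it suffices to build a master tour for which $\E[N_\tau(2^i)]$ is, for every scale $2^i$ simultaneously, within a constant factor of the optimum (after relaxing the budget by a constant). Since a vertex reached by $\tau_A$ within distance $t$ lies on a rooted path of length $\le t$ visiting only active vertices, $\E[N_\tau(2^i)]\le \E_A[m_A(2^i)]$, where $m_A(B)$ is the maximum number of vertices of $A$ on a rooted path of length $\le B$; this is the quantity we will charge against.

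\textbf{The budgeted subproblem.} Fix a budget $B$. We seek a master path $P$ (a prefix of the eventual tour) with \emph{expected} shortcut length $O(B)$ and large reward $\sum_{v\in P}p_v$: by Markov's inequality the realized shortcut length is then $O(B)$ with constant probability, and in that event all active vertices of $P$ are reached by time $O(B)$, so $\E[N_\tau(O(B))]\ge \tfrac12\sum_{v\in P}p_v$. To solve ``maximize $\sum_{v\in P}p_v$ subject to $\E[\text{shortcut length of }P]\le B$'' we need good surrogates for the expected shortcut length. Two easy bounds are $\E[\text{shortcut length}(P)]\le\mathrm{len}(P)$, which reduces the subproblem to rooted orienteering (constant approximations are known), and -- via $d(u,w)\le d(r,u)+d(r,w)$ together with a telescoping sum over the ``next active vertex on $P$'' -- the linear estimate $\E[\text{shortcut length}(P)]\le 2\sum_{v\in P}p_v\,d(r,v)$, which reduces it to a knapsack-type selection. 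Neither alone is adequate, because the expected shortcut length is far smaller than both whenever $P$ traverses a tight cluster of distant low-probability vertices; the right surrogate must interpolate between ``reaching an annulus'' and ``touring within an annulus,'' which is precisely the kind of estimate behind the constant-factor a priori TSP bound, and I would adapt it here by grouping the vertices of $P$ into distance scales from $r$ and paying, per scale, the scale radius times a capped expected count.

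\textbf{Lower bound and assembly.} The remaining -- and I expect hardest -- step is to show that the optimum of the budgeted subproblem at scale $O(B)$ is $\Omega(\E_A[m_A(B)])$, i.e.\ that one deterministic path with small expected shortcut length competes with the whole family of \emph{random} optimal paths $\{Q_A\}_A$. The obstacle is the gap between the mean and the median of a shortcut length: one cannot just take the prefix of the optimal tour whose expected shortcut length is $O(B)$, since a shortcut length may be small with high probability yet have large expectation. I would handle this by an averaging/LP-duality argument over the random instances that exploits independence, extracting -- scale by scale, again through the distance-bucketing -- a single deterministic path whose expected contribution at each scale is comparable to the average contribution of the $Q_A$'s, losing only constants. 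Granting the subproblem approximation, the master tour is obtained by solving the subproblem for budgets $B=2^i$ and concatenating the resulting paths in increasing order of $i$, using the standard geometric-concatenation argument so that lengths telescope and a vertex collected at scale $2^i$ is reached within $O(2^i)$ on the master tour; substituting into the reformulated objective and comparing term by term against $\E_A[m_A(2^i)]$ yields the constant factor. The two points that need real work are (i) the scale-wise extraction of one good path from the random optimal ones, and (ii) checking that the knapsack/annulus component meshes with concatenation, so that segments that are long in the metric but cheap in expectation do not, in bad realizations, delay too many later active vertices.
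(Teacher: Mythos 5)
Your proposal attacks the non-uniform problem head-on via the classical minimum-latency framework (geometric time scales, a budgeted path subproblem per scale, concatenation), which is genuinely different from what the paper does: the paper never solves such a stochastic subproblem, but instead reduces the non-uniform instance to a uniform one (replacing each vertex $v$ by $\lceil p_v/p\rceil$ co-located copies of probability $p$, handling tiny-probability vertices separately, and proving that any master tour on the blown-up instance can be made to visit co-located copies consecutively without increasing expected latency), and then invokes the known $O(1)$-approximation for the uniform case. However, your sketch has a genuine gap precisely at the step you yourself flag as the hardest: the claim that the budgeted subproblem at scale $O(B)$ admits a solution of value $\Omega(\E_A[m_A(B)])$, i.e.\ that a single deterministic master path with small \emph{expected} shortcut length competes with the family of per-scenario optimal paths $\{Q_A\}$. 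No argument is given beyond ``an averaging/LP-duality argument over the random instances that exploits independence.'' This is exactly where the difficulty of {\em a priori} TRP with non-uniform probabilities lives -- the uniform-probability analysis of van Ee and Sitters already needed substantial work to get stochastic analogues of the deterministic $k$-MST/orienteering step, and (as the paper notes) that analysis relies heavily on uniformity -- so asserting this lemma is essentially assuming the theorem. Likewise, the ``right surrogate'' for expected shortcut length interpolating between the orienteering bound and the knapsack bound is described only by analogy to {\em a priori} TSP and never constructed, and without it the subproblem is not even well specified.

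There is also a technical error in the assembly step. You argue that if a path $P$ has expected shortcut length $O(B)$, then by Markov the realized length is $O(B)$ with constant probability, ``and in that event all active vertices of $P$ are reached by time $O(B)$, so $\E[N_\tau(O(B))]\ge \tfrac12\sum_{v\in P}p_v$.'' The shortcut length of $P$ is monotone nondecreasing in the active set (triangle inequality), so the event that it is small is negatively correlated with the number of active vertices on $P$; conditioning on the good event can only \emph{decrease} the expected number of active vertices collected, and the stated inequality does not follow from Markov. This particular issue can likely be repaired (e.g.\ by applying Markov to the expected shortcut distance of each vertex of $P$ separately and summing $p_v$ over the surviving vertices), but as written the step is wrong, and together with the missing subproblem lemma the proposal does not constitute a proof.
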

Previously, \cite{van2018priori} obtained such a result  under the restriction that all activation probabilities are {\em identical}, and posed the general case of non-uniform probabilities as an open question-- which we resolve. Our result adds to the  small list of {\em a priori} VRPs with provable worst-case guarantees: traveling salesman, capacitated vehicle routing and traveling repairman. 

In fact, we obtain Theorem~\ref{thm:apTRP-gen} by a generic reduction of {\em a priori} TRP from non-uniform to uniform probabilities, formalized below. 

\begin{theorem}\label{thm:apTRP-redn}
There is a $(6.27\rho)$-factor approximation algorithm for the {\em a priori} traveling repairman problem under independent probabilities, where $\rho$ denotes the best approximation ratio for the problem under uniform probabilities.
\end{theorem}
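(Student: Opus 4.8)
The plan is to prove Theorem~\ref{thm:apTRP-redn} by reducing the non-uniform instance to a collection of near-uniform ones. Partition $V$ into classes $V_i=\{v\in V: 2^{-i-1}<p_v\le 2^{-i}\}$ for $i=0,1,2,\dots$ (only finitely many are nonempty; any vertex with $p_v=0$ is irrelevant and can be appended at the end of the master tour). Inside a class the probabilities agree up to a factor of $2$, so the class-$i$ instance — on $V_i\cup\{r\}$ with the induced metric — is ``essentially uniform''. I would first show that replacing every $p_v$, $v\in V_i$, by the common value $2^{-i}$ changes $\elat_\sigma$ of any fixed master tour $\sigma$ by at most a constant factor. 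Lowering probabilities can only lower $\elat_\sigma$, and the opposite (``doubling'') direction I would obtain from a coupling argument combined with the fact that $\lat^A_\sigma(v)$ is a monotone and well-behaved function of the active set: inserting an additional active vertex into a shortcut route and re-shortcutting can only lengthen the latency of each surviving vertex (triangle inequality), and in expectation this lengthening is controlled. Consequently, invoking the assumed $\rho$-approximation for uniform probabilities on the class-$i$ instance yields a master tour $\tau_i$ on $V_i\cup\{r\}$ whose class-$i$ objective is $O(\rho)$ times the class optimum $\opt_i$.

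\textbf{A matching lower bound.} The second ingredient is $\sum_i \opt_i \le \opt$. Let $\tau^\star$ be an optimal master tour for the original instance. Shortcutting $\tau^\star$ onto $V_i\cup\{r\}$ gives a feasible master tour for the class-$i$ instance; coupling the class-$i$ active set with $A\cap V_i$ for the global active set $A$ and using the same monotonicity (deleting vertices from a shortcut route only decreases the surviving latencies), the class-$i$ objective of this restriction is at most $\E_A\big[\sum_{v\in A\cap V_i}\lat^A_{\tau^\star}(v)\big]$. Since the classes partition $V$, summing over $i$ gives $\sum_i\opt_i\le \elat_{\tau^\star}=\opt$. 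Together with the previous step, $\sum_i (\text{class-}i\text{ objective of }\tau_i) = O(\rho)\cdot\opt$.

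\textbf{Stitching the class tours (the hard part).} It remains to merge the $\tau_i$ into a single master tour $\tau$ on $V$ with $\elat_\tau = O(1)\cdot\sum_i(\text{class-}i\text{ objective of }\tau_i)$. Simply concatenating the $\tau_i$ in decreasing order of probability fails, because the latency added to a class by all earlier classes is governed by the \emph{lengths} of their tours, which are not controlled by their objectives. Instead I would interleave the $\tau_i$ by geometric distance scale: cut each $\tau_i$ into the prefixes that first reach radii $1,2,4,\dots$ around $r$ (a radius-$R$ prefix has length at most $R$), and arrange $\tau$ so that after travelling distance $\Theta(2^t)$ it has already traced the radius-$2^t$ prefix of every class. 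Then, by the triangle inequality, an active vertex $v\in V_i$ has latency in $\tau$ at most a constant times its latency in $\tau_i$ plus the distance $\tau$ spent in the other classes before it reached $v$'s scale. The obstacle is that a naive round robin over all classes at every scale loses a factor equal to the number of classes; avoiding this requires a more careful schedule together with a charging argument that bounds, scale by scale, the detour devoted to the remaining classes against those classes' own expected latencies (again exploiting that the shortcut route of a class within a given radius is already ``paid for'' by the latencies it generates). Controlling these constants is exactly what produces the explicit factor $6.27$; combining the three steps proves Theorem~\ref{thm:apTRP-redn}, and feeding in the known constant-factor algorithm for the uniform case proves Theorem~\ref{thm:apTRP-gen}.
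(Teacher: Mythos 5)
The decisive step of your plan --- stitching the per-class tours $\tau_i$ into one master tour with only an $O(1)$ loss --- is exactly where the proposal breaks down, and you have not supplied an argument for it. The number of nonempty classes can be as large as $\Theta(\log(1/\min_v p_v))$, i.e.\ polynomial in the input size, so the factor ``equal to the number of classes'' lost by a round robin over scales is not a constant, and the charging argument you gesture at (``bound, scale by scale, the detour devoted to the remaining classes against those classes' own expected latencies'') is precisely the original non-uniform difficulty in disguise: a class $j$ with very small activation probabilities has a tiny expected latency of its own, yet when one of its vertices does become active it delays \emph{all} later high-probability vertices, and nothing in your setup lets you pay for that delay out of class $j$'s objective. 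In the uniform case such cross-charging works because every vertex contributes equally in expectation; restoring it across classes with wildly different $p_v$ is the whole problem, so declaring it solvable ``with a more careful schedule'' is a genuine gap, not a routine omission. Your claim that tuning these constants ``is exactly what produces the explicit factor $6.27$'' is also unsupported: in the paper the constant is $(1+o(1))\left(\frac{e}{e-1}\right)^4\approx 6.27$ and comes from probability-perturbation losses, not from any scale-interleaving.

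The paper sidesteps the merging problem entirely: instead of bucketing into near-uniform classes and combining their tours, it builds a \emph{single} uniform instance $\mathcal{J}$ by replacing each vertex $v$ (with $p_v\ge 1/n^2$) by $\lceil p_v/p\rceil$ co-located copies, each active with one common tiny probability $p$, so the burden of ordering vertices of different probabilities is delegated to the uniform $\rho$-approximation itself. The remaining work is then (i) showing that constant-factor perturbations of the probability vector change the expected latency of any fixed tour by only a constant factor (Lemmas~\ref{lem:probcost-dec} and~\ref{lem:probcost-inc}, which also cover the ``doubling'' direction you only sketch), (ii) relating $\opt(\mathcal{J})$ and $\opt(\mathcal{I})$ and transferring the tour back (Lemmas~\ref{lem:optineq} and~\ref{lem:algineqcomp}), (iii) the \textsc{MakeConsecutive} procedure guaranteeing the copies of a vertex can be visited consecutively at no extra cost (Theorem~\ref{thm:consec}), and (iv) appending the vertices with $p_v<1/n^2$ at the end in order of distance, which is the only place where a ``visit them later'' argument is needed and works because their total activation mass is at most $1/n$. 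Your bucketing and the lower bound $\sum_i \opt_i\le \opt$ are fine as far as they go, but without a proof of the merging lemma the argument does not establish Theorem~\ref{thm:apTRP-redn}.
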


Clearly, Theorem~\ref{thm:apTRP-gen} follows by combining Theorem~\ref{thm:apTRP-redn} with the   $O(1)$-approximation algorithm for {\em a priori} TRP under uniform probabilities by~\cite{van2018priori}. As the constant factor in \cite{van2018priori} for uniform probabilities is quite large, there is the possibility of improving it using a different algorithm: Theorem~\ref{thm:apTRP-redn} would be applicable to any such future improvement and yield a corresponding improved result for non-uniform probabilities.  

\subsection{Related Work.}

The {\em a priori} optimization model was introduced in \cite{Jaillet85} and \cite{bertsimas1988probabilistic}, see also the survey by~\cite{bertsimas1990priori}. These papers considered  the setting where  the metric is itself random and carried out asymptotic analysis (as the number of vertices grows large). They obtained such results for the minimum spanning tree, traveling salesman, capacitated vehicle routing and traveling salesman facility location problems.

Approximation algorithms for {\em a priori} optimization are more recent: these can handle arbitrary problem instances. Such results are known for the traveling salesman problem (TSP), capacitated VRP and traveling repairman (TRP). We briefly discuss them below. 

The {\em a priori} TSP has been extensively studied. In particular, there is a randomized $4$-approximation algorithm for independent probabilities by~\cite{shmoys2008constant}. The same paper also gave a deterministic $8$-approximation algorithm; the constant was later improved to $6.5$ in \cite{van2011deterministic}. These algorithms were based on a random-sampling approach~(\cite{GuptaKPR07,williamson2007simpler}) that was previously used in other network design problems. For arbitrary (black-box) distributions, 
\cite{schalekamp2008algorithms} gave a randomized $O(\log n)$-approximation algorithm which actually does not even need any knowledge on the distribution. Later, \cite{gorodezky2010improved} proved an $\Omega(\log n)$ lower bound on the approximation ratio of any deterministic algorithm for {\em a priori} TSP under arbitrary distributions.  

The capacitated VRP with stochastic demands~(\cite{b92}) is another well-studied {\em a priori} optimization problem. Here, we have a vehicle with limited capacity $Q$ at the root that needs to satisfy demands at various vertices. The demand at each vertex is an independent random variable with a known distribution. A master solution to this problem is a tour $\tau$ that visits every vertex;  after  demands are observed, the vehicle visits  vertices in the same order as $\tau$ while  performing additional refill-trips to the root whenever it runs out of items. The objective is to minimize the total length of the tour. A $2.5$-approximation algorithm for this problem in the case of {\em identical} demand distributions was given in \cite{b92}. Later,  \cite{GNR12}  obtained a randomized $2.5$-approximation algorithm for this problem under non-identical distributions. 

The {\em a priori} TRP was recently studied in \cite{van2018priori}, where a constant-factor approximation algorithm was obtained for the case of uniform independent probabilities. They left open the problem under non-uniform probabilities: Theorem~\ref{thm:apTRP-redn} resolves this positively.  The algorithm in \cite{van2018priori} was based on many ideas from the deterministic TRP, but it needed stochastic counterparts of various properties. As noted in \cite{van2018priori}, their proof  relied heavily on the  probabilities being uniform and it was unclear how to handle non-uniform  probabilities. 

We note that the deterministic traveling repairman problem (TRP) has  been studied extensively, both in exact algorithms~(\cite{picard1978time,FischettiLM93,wu2000polynomial}) and approximation algorithms (\cite{blum1994minimum}, \cite{goemans1998improved}, \cite{arora2003approximation}, \cite{chaudhuri2003paths}). It was shown to be NP-hard even on weighted trees by~\cite{sitters2002minimum}, and a polynomial time approximation scheme (PTAS) on such metrics was given by~\cite{Sitters14}. On general metrics, the best approximation ratio known is $3.59$ due to~\cite{chaudhuri2003paths}; it is also known that one cannot obtain a PTAS. 

\section{{\em A Priori} TRP with Non-Uniform Distribution}

\def\ii{\ensuremath{{\cal I}}\xspace}
\def\ij{\ensuremath{{\cal J}}\xspace}
\def\ik{\ensuremath{{\cal I}_q}\xspace}
\def\ib{\ensuremath{{\cal I}_{\bar{p}}}\xspace}
\def\hV{\ensuremath{\widehat{V}}\xspace}   

Consider an instance $\mathcal{I}$ of {\em a priori} TRP on metric $(V,d)$ with probabilities $\{p_v\}_{v\in V}$. We show how to  ``reduce''  this  instance to one with uniform probabilities, which would prove Theorem~\ref{thm:apTRP-redn}.  Our approach is natural: we  replace  each vertex $v\in V$ with a group $S_v$ of co-located vertices, where each new vertex is active with a uniform probability $p$. Let \ij denote the new instance and $(\hV,d)$ the new metric. Intuitively, when  $p$ is chosen much smaller than the $p_v$s and $|S_v| \approx p_v/p$, the scaled uniform instance \ij should behave similar to ${\cal I}$. However, proving such a result formally requires significant technical work. For example,  
the master tour found by an algorithm for the  scaled (uniform) instance  might not visit all the co-located copies consecutively. We define a {\em consecutive} master tour for \ij as one that visits all co-located vertices consecutively. Then, we show an approximate equivalence between (i) master tours in \ii and (ii) consecutive master tours in \ij. 
This relies on the independence across vertices and the  correspondence between the events ``vertex $v$ is active in \ii'' and ``at least one vertex of $S_v$ is active in \ij''. This is formalized in Section~\ref{subsec:x}. Then, we show in Section~\ref{subsec:consec} that any master tour for instance \ij  can be modified to a ``consecutive'' master tour with the same or better overall expected latency. Finally, in order  to maintain a polynomial-size instance \ij  (this is reflected in the choice of $p$),  we need to take care of  vertices with very small probability separately. In Section~\ref{subsec:y} we show that the overall effect of the  small-probability vertices is tiny if they are  visited in non-decreasing order of distances   at the end  of our master tour. 

\begin{algorithm}
\caption{Reducing non-uniform instance \ii to uniform instance \ij}\label{const}
\begin{algorithmic}[1]
\State $Y \gets \{v\in V\,|\,p_v<1/{n^2}\}$ denotes the low probability vertices.
\State $X \gets \{v\in V \,|\, p_v\geq 1/{n^2}\}$ denotes all other vertices. 
\State $p \gets \frac{1}{n}\min_{v\in X}p_v$
\State Construct instance $\mathcal{J}$ with vertex set \hV  that contains for each $v\in X$, a set $S_v$ of $t_v = \lceil\frac{p_v}{p}\rceil$ copies of $v$. The distance between any two vertices of $S_v$ is zero for all $v\in X$. The distance between any vertex of $S_u$ and any vertex of $S_v$ is $d(u,v)$. All vertices in \hV have a uniform activation probability $p$.
\State Run any approximation algorithm for {\em uniform a priori} TRP  on $\mathcal{J}$ to obtain master tour $\widehat{\pi}$.
\State\label{step:consec}Run procedure \textsc{MakeConsecutive}($\widehat{\pi}$) to ensure that $\widehat{\pi}$ visits each group $S_v$ consecutively.
\State Obtain tour $\pi$ by visiting vertices of $X$ 
  in the same order that  $S_v$s are visited in ${\widehat{\pi}}$.
\State Extend $\pi$ by visiting vertices $w\in Y$ in {\em non-decreasing} order of $d(r,w)$, to obtain tour $\bar{\pi}$.\\
\Return $\bar{\pi}$.
\end{algorithmic}
\end{algorithm}

Algorithm~\ref{const} describes the reduction formally. In Step~\ref{step:consec}, Algorithm~\ref{const} relies on a  procedure \textsc{MakeConsecutive} that  modifies tour $\widehat{\pi}$  such that it visits all copies of the same node consecutively. We will prove Theorem~\ref{thm:apTRP-redn} by analyzing this algorithm.

\subsection{Overview of Analysis.} 
We first assume that the master tour $\widehat{\pi}$ on instance \ij already  visits copies of each vertex consecutively: so there is no need for Step~\ref{step:consec}. We split this proof into two parts corresponding to the $X$-vertices (normal probabilities) and $Y$-vertices (low probabilities). The analysis for $X$-vertices (Section~\ref{subsec:x}) is the main part, where we  show that the optimal values of \ii and \ij are within a constant factor of each other. In Lemma~\ref{lem:probcost-inc}  we show that a constant-factor perturbation in probabilities of $V$ will only change the cost of any solution (including the optimal) by a constant factor. Then we prove (in Lemma~\ref{lem:optineq}) that the optimal value of instance \ij is within a constant factor of the optimal value of  \ii: although \ij has many more vertices than \ii, the proof exploits the fact that the expected number of active vertices is roughly the same as \ii. Lemma~\ref{lem:algineqcomp} proves the other direction for the cost of our algorithm, i.e. the cost of Algorithm~\ref{const} for \ii is at most that of the consecutive master tour for \ij.  To handle the $Y$-vertices,  we use a simple expected distance lower-bound to  show (in Section~\ref{subsec:y})  that visiting $Y$ at the end of  our tour only adds a small factor to the overall expected cost.

Note that we assumed above that the master tour $\widehat{\pi}$ visits copies of each vertex consecutively. It is possible that the algorithm for uniform {\em a priori} TRP  in~\cite{van2018priori} already has this property, in which case the analysis outlined above suffices. However, by  providing an explicit   subroutine (\textsc{MakeConsecutive}) that ensures this consecutive property, our approach can be combined with {\em any}  algorithm for uniform {\em a priori} TRP. The details of the \textsc{MakeConsecutive} procedure and its analysis appear in Section~\ref{subsec:consec}.  

\subsection{Analysis for vertices in X.}\label{subsec:x}

Here we analyze the steps of the algorithm that deal with vertices in $X$, i.e. with probability at least $\frac1{n^2}$. In order to reduce notation, we will assume here that $X=V$ which is the entire vertex set.  Recall that $p=\frac1n\cdot \min_{v\in V} p_v$ . 
 Also define $\bar{p}_v=\min\left\{ (1+\frac1n)p_v, 1\right\}$, $t_v=\lceil p_v/p\rceil$ and $q_v=1-(1-p)^{t_v}$ for each $v\in V$. We will refer to the instances on metric $(V,d)$ with probabilities $\{p_v\}_{v\in V}$, $\{q_v\}_{v\in V}$ and $\{\bar{p}_v\}_{v\in V}$ as $\ii_p$, \ik and \ib respectively. Note that the original instance is $\ii=\ii_p$. For simplicity we use $\mathbf{p}, \mathbf{q}$ and $\mathbf{\bar{p}}$ to refer to the vector of probabilities for each corresponding distribution.

\begin{lemma}For any $v\in V$, we have $p_v(1-\frac{1}{e})\leq q_v\leq \bar{p}_v\le  p_v (1+\frac{1}{n})$.\label{lem:alphatimes}\end{lemma}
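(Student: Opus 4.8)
The plan is to bound $q_v = 1-(1-p)^{t_v}$ from below and above, where $t_v = \lceil p_v/p\rceil$, and compare with $p_v$ and $\bar p_v = \min\{(1+\tfrac1n)p_v,1\}$.

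For the upper bound $q_v \le \bar p_v$: I would use the elementary inequality $(1-p)^{t} \ge 1 - t p$ for $t\ge 0$ (Bernoulli), which gives $q_v = 1-(1-p)^{t_v} \le t_v\, p$. Since $t_v = \lceil p_v/p\rceil < p_v/p + 1$, we get $t_v p < p_v + p$. Now recall $p = \tfrac1n \min_u p_u \le \tfrac1n p_v$, so $t_v p < p_v + \tfrac1n p_v = (1+\tfrac1n)p_v$. Combined with the trivial bound $q_v \le 1$ (as $q_v$ is a probability), this yields $q_v \le \min\{(1+\tfrac1n)p_v, 1\} = \bar p_v$. The last inequality $\bar p_v \le (1+\tfrac1n)p_v$ is immediate from the definition of $\bar p_v$ as a minimum.

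For the lower bound $q_v \ge (1-\tfrac1e)p_v$: the natural tool is $1-(1-p)^{t} \ge 1 - e^{-pt}$ together with $1-e^{-x} \ge (1-\tfrac1e)x$ for $x\in[0,1]$ (concavity of $1-e^{-x}$ on $[0,1]$, with equality at endpoints). Here $x = p\,t_v$; I need $p\, t_v \le 1$, which follows since $t_v = \lceil p_v/p\rceil \le p_v/p + 1$ gives $p t_v \le p_v + p \le p_v + \tfrac1n p_v \le 1$ when $p_v$ is not too close to $1$ — one has to be slightly careful here. Alternatively, and more robustly, use $p t_v \ge p \cdot (p_v/p) = p_v$, so directly $1-e^{-p t_v} \ge 1-e^{-p_v} \ge (1-\tfrac1e)p_v$, using $1-e^{-x}\ge(1-\tfrac1e)x$ on $x=p_v\in[0,1]$. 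This avoids the case analysis entirely: $q_v = 1-(1-p)^{t_v} \ge 1-e^{-p t_v} \ge 1-e^{-p_v} \ge (1-\tfrac1e)p_v$.

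The only mild obstacle is making sure the concavity bound $1-e^{-x}\ge (1-\tfrac1e)\,x$ is applied on the correct interval $x\in[0,1]$; since $p_v\le 1$ this is fine, and using $p t_v \ge p_v$ rather than $p t_v \le 1$ sidesteps any worry about $t_v p$ exceeding $1$. Everything else is routine application of Bernoulli's inequality and $1+x\le e^x$.
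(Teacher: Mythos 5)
Your proposal is correct and follows essentially the same route as the paper's proof: Bernoulli's inequality $(1-p)^{t_v}\ge 1-t_v p$ for the upper bound $q_v\le t_v p\le p_v+p\le(1+\tfrac1n)p_v$ (plus $q_v\le 1$), and $1+x\le e^x$ together with $t_v\ge p_v/p$ and $1-e^{-x}\ge(1-\tfrac1e)x$ on $[0,1]$ applied at $x=p_v$ for the lower bound. Your observation that evaluating the concavity bound at $x=p_v$ (rather than at $x=pt_v$) avoids any case analysis is exactly what the paper does as well.
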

\begin{proof}
Note that for every real number $x$ we have $1+x\leq e^x$: using $x = -p$ and raising both sides to the power of $t_v$ we obtain $(1-p)^{t_v}\leq e^{-pt_v}$. Now we have:
$$q_v = 1 - (1-p)^{t_v} \geq 1 - e^{-pt_v}\geq 1-e^{-p\cdot\frac{p_v}{p}} = 1 - e^{-p_v} \geq (1-\frac{1}{e})p_v\;.$$
The second inequality uses $t_v = \lceil p_v/p\rceil$ and  the last one uses $1-e^{-x} \ge (1-1/e)x$ for any $x\in [0,1]$ with $x=p_v$.
Now, to prove the other inequality we consider the bionomial expansion of $(1-p)^{t_v}$ and cut it off for the powers greater than 1. So we  have:
$$q_v = 1- (1-p)^{t_v}\leq 1 - (1-pt_v)=pt_v\leq p(\frac{p_v}{p}+1)\leq p_v + \frac{p_v}{n} = p_v(1+\frac{1}{n})\;.$$
Combined with the fact that $q_v\le 1$, we obtain $q_v\le \bar{p}_v$.  
\end{proof}
\begin{lemma}
Let $\pi$ be any master tour on $(V,d)$. Consider two probability distributions given by $\{q_v\}_{v\in V}$ and $\{\bar{p}_v\}_{v\in V}$ such that $0\le q_v \le \bar{p}_v\le 1$ for each $v\in V$. Then the expected latency of $\pi$ under $\{q_v\}_{v\in V}$ is at most that under $\{\bar{p}_v\}_{v\in V}$.  \label{lem:probcost-dec}
\end{lemma}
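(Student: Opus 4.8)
The plan is to prove this by exhibiting a monotone coupling of the two distributions, using independence across vertices together with the elementary fact that shortcutting a tour can only decrease distances. Fix once and for all the order in which $\pi$ visits the vertices, say $v_1,v_2,\dots,v_n$ after the root $r$. For any active set $A\sse V$ and any $v_i\in A$, the path realizing $\lat_\pi^A(v_i)$ is obtained from the path realizing $\lat_\pi^{A'}(v_i)$ (for any $A'$ with $A\sse A'\sse V$ and $v_i\in A'$) by deleting the vertices of $A'\setminus A$ that precede $v_i$ in $\pi$, keeping the endpoints $r$ and $v_i$ fixed. Each such deletion replaces a sub-path $a\to w\to b$ by the edge $a\to b$, which by the triangle inequality only shortens the path. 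Hence $\lat_\pi^A(v_i)\le \lat_\pi^{A'}(v_i)$ whenever $A\sse A'$ and $v_i\in A$, and therefore
\[
\lat_\pi^A \;=\; \sum_{v\in A}\lat_\pi^A(v)\;\le\;\sum_{v\in A}\lat_\pi^{A'}(v)\;\le\;\sum_{v\in A'}\lat_\pi^{A'}(v)\;=\;\lat_\pi^{A'},
\]
the last step just adding nonnegative terms for $v\in A'\setminus A$. In other words, $A\mapsto \lat_\pi^A$ is monotone nondecreasing with respect to set inclusion.

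Next I would couple the two active sets on a common probability space: for each $v\in V$ draw an independent uniform random variable $U_v\in[0,1]$, and set $A_{\mathbf q}=\{v\in V: U_v\le q_v\}$ and $A_{\mathbf{\bar p}}=\{v\in V: U_v\le \bar p_v\}$. Because each of the two distributions has independent coordinates, $A_{\mathbf q}$ has exactly the law of the active set under $\{q_v\}_{v\in V}$ and $A_{\mathbf{\bar p}}$ has exactly the law of the active set under $\{\bar p_v\}_{v\in V}$. Since $q_v\le \bar p_v$ for every $v$, we have $A_{\mathbf q}\sse A_{\mathbf{\bar p}}$ with probability one.

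Combining the two steps, $\lat_\pi^{A_{\mathbf q}}\le \lat_\pi^{A_{\mathbf{\bar p}}}$ pointwise on this coupled space, so taking expectations over the $U_v$'s yields
\[
\E_{A\gets \mathbf q}\big[\lat_\pi^A\big]\;=\;\E\big[\lat_\pi^{A_{\mathbf q}}\big]\;\le\;\E\big[\lat_\pi^{A_{\mathbf{\bar p}}}\big]\;=\;\E_{A\gets \mathbf{\bar p}}\big[\lat_\pi^A\big],
\]
which is precisely the claim. I do not expect a serious obstacle here: the only point that needs care is the set-monotonicity of $\lat_\pi^{(\cdot)}$, and this is immediate from the triangle inequality once one observes that the smaller active set's tour is a shortcut of the larger one's. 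The role of the independence hypothesis is exactly to legitimize the coordinatewise coupling that realizes the two distributions simultaneously with $A_{\mathbf q}\sse A_{\mathbf{\bar p}}$.
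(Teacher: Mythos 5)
Your proof is correct, but it takes a different route from the paper's. The paper treats the expected latency as a function $f(p_1,\dots,p_n)$ of the vertex probabilities, writes it explicitly as the multilinear polynomial $\sum_{A\sse V}\bigl(\prod_{u\in A}p_u\prod_{w\notin A}(1-p_w)\bigr)\lat_\pi^A$, and shows every partial derivative $\partial f/\partial p_v$ is non-negative, each term of the derivative being $\lat_\pi^{A\cup v}-\lat_\pi^A\ge 0$ by the triangle inequality; monotonicity in each coordinate then gives $f(\mathbf{q})\le f(\mathbf{\bar p})$. You instead establish the set-monotonicity $A\sse A' \Rightarrow \lat_\pi^A\le\lat_\pi^{A'}$ (the same triangle-inequality shortcutting fact, just phrased for arbitrary nested sets rather than a single added vertex) and then realize both product distributions on one probability space via the standard uniform-variable coupling, so that the $\mathbf{q}$-active set is almost surely contained in the $\mathbf{\bar p}$-active set and the comparison holds pointwise before taking expectations. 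Both arguments hinge on exactly the same core fact; your coupling makes the role of independence explicit and yields the slightly stronger pointwise (stochastic dominance) statement without writing out the multilinear extension, while the paper's derivative formulation is the more analytic phrasing of the same monotonicity and sets up the algebraic style it reuses in Lemma~\ref{lem:probcost-inc}. Either proof is complete and acceptable.
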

\begin{proof}
Let function $f(p_1,\cdots p_n)$ denote the expected latency of $\pi$ as a function of vertex  probabilities $\{p_v\}$. We will show that all partial derivatives of $f$ are non-negative. This would imply the lemma. 

We can express $f$ as a multilinear polynomial 
$$f(\mathbf{p}) = \sum_{A\sse V}  \left( \prod_{u\in A} p_u  \prod_{w\in V\setminus A} (1-p_w ) \right) \cdot \lat_\pi^A\;.$$ 
Recall that $\lat_\pi^A$ is the total latency of vertices in active set $A$ in the shortcut tour $\pi_A$. 
So the $v^{th}$ partial derivative is:
$$\frac{\partial f}{\partial p_v} = \sum_{A\sse V\setminus v} \left( \prod_{u\in A} p_u  \prod_{w\in V\setminus A\setminus v} (1-p_w ) \right) \big(  \lat_\pi^{A\cup v} - \lat_\pi^{A} \big)\;.$$
For any $A\sse V\setminus v$, it follows by triangle inequality  that $\lat_\pi^{A\cup v} \ge \lat_\pi^A$. This shows that each term in the above summation is non-negative and so $\frac{\partial f}{\partial p_v} \ge 0$.  
\end{proof}

\begin{lemma}
Let $\pi$ be any master tour on $(V,d)$. Consider two probability distributions  given by $\{q_v\}_{v\in V}$ and $\{\bar{p}_v\}_{v\in V}$ and some constant $\beta \leq 1$ such that $\beta \bar{p}_v\le q_v\le\bar{p}_v$ for each $v\in V$. Then the expected latency of $\pi$ under $\{q_v\}_{v\in V}$ is at least $\beta^3$ times that under  $\{ \bar{p}_v\}_{v\in V}$.  
\label{lem:probcost-inc}
\end{lemma}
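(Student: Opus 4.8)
The plan is to rewrite the expected latency of a fixed master tour $\pi$ in a form where the dependence on the probability vector splits into two nonnegative factors per vertex, and then to lose a controlled number of powers of $\beta$ in each factor. Fix the order $v_1,\dots,v_n$ in which $\pi$ visits the vertices, with $v_0=r$. For an active set $A$, the arrival time of an active vertex $v_j$ in $\pi_A$ is the length of the path from $r$ through the active vertices preceding $v_j$ to $v_j$. The edge of $\pi_A$ entering $v_j$ (from its active predecessor, or from $r$ if $v_j$ is the first active vertex) — call its length the \emph{last jump into $v_j$} — is traversed once and contributes to the arrival time of $v_j$ and of every active vertex after it, so summing arrival times and regrouping gives
$$\lat_\pi^A \;=\; \sum_{j:\,v_j\in A}\ (\text{last jump into } v_j \text{ in }\pi_A)\cdot |\{k\ge j:\ v_k\in A\}|.$$
Conditioned on $v_j\in A$, the last jump into $v_j$ depends only on $A\cap\{v_1,\dots,v_{j-1}\}$, while the suffix count $|\{k\ge j:v_k\in A\}|$ depends only on $A\cap\{v_{j+1},\dots,v_n\}$; by independence across vertices these are independent, so the expectation factors and
$$\elat_\pi(\mathbf p)=\sum_{j=1}^n M_j(\mathbf p)\,J_j(\mathbf p),$$
where $M_j(\mathbf p)=1+\sum_{k>j}p_{v_k}$ and $J_j(\mathbf p)=p_{v_j}\big(d(r,v_j)\prod_{k<j}(1-p_{v_k})+\sum_{i<j}d(v_i,v_j)\,p_{v_i}\prod_{i<k<j}(1-p_{v_k})\big)$.

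Given this identity I would compare the two distributions factor by factor. For $M_j$: since $\beta\bar p_v\le q_v$ and $\beta\le 1$, $M_j(\mathbf q)=1+\sum_{k>j}q_{v_k}\ge 1+\beta\sum_{k>j}\bar p_{v_k}\ge\beta\big(1+\sum_{k>j}\bar p_{v_k}\big)=\beta\,M_j(\bar{\mathbf p})$, costing one factor of $\beta$. For $J_j$: each product factor $1-q_{v_k}$ only helps, since $q_v\le\bar p_v$ gives $1-q_{v_k}\ge 1-\bar p_{v_k}$; the leading $q_{v_j}\ge\beta\bar p_{v_j}$ costs a factor of $\beta$; each weight $q_{v_i}\ge\beta\bar p_{v_i}$ inside the sum costs a further factor of $\beta$; and pulling $\beta\le 1$ out of the $d(r,v_j)$-term as well yields $J_j(\mathbf q)\ge\beta^2 J_j(\bar{\mathbf p})$. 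Since every $M_j$ and $J_j$ is nonnegative, multiplying and summing gives $\elat_\pi(\mathbf q)\ge\beta^3\,\elat_\pi(\bar{\mathbf p})$. Morally, the exponent $3$ is "one $\beta$ from the suffix count, and two from the prefix structure — $v_j$ active times its active predecessor."

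The main obstacle, and essentially the only place requiring care, is establishing the factored identity $\elat_\pi(\mathbf p)=\sum_j M_j(\mathbf p)J_j(\mathbf p)$: one must justify the regrouping of arrival times into "(last jump)$\times$(suffix count)," treat the boundary case where $v_j$ is the first active vertex (last jump $=d(r,v_j)$), and verify the conditional independence that makes the expectation split. It is worth noting why a cruder route fails: bounding the multilinear expansion $f(\mathbf p)=\sum_{A}\big(\prod_{u\in A}p_u\prod_{w\notin A}(1-p_w)\big)\lat_\pi^A$ termwise only gives $\prod_{u\in A}q_u\ge\beta^{|A|}\prod_{u\in A}\bar p_u$, i.e. a factor $\beta^{|A|}$ that degrades with $|A|$; the purpose of the $M_j,J_j$ decomposition is precisely to trade this per-subset $\beta^{|A|}$ loss for a uniform $\beta^3$. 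An equivalent but slightly tidier organization would first apply Lemma~\ref{lem:probcost-dec} with $\beta\bar{\mathbf p}\le\mathbf q$ to reduce the claim to $\elat_\pi(\beta\bar{\mathbf p})\ge\beta^3\,\elat_\pi(\bar{\mathbf p})$, and then run the same factorwise comparison between $\beta\bar{\mathbf p}$ and $\bar{\mathbf p}$ — but this still rests on the same identity.
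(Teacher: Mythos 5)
Your proof is correct and is essentially the paper's own argument: the identity $\elat_\pi(\mathbf p)=\sum_j M_j(\mathbf p)J_j(\mathbf p)$ is exactly the paper's edge-based decomposition $\sum_{i<j} d(i,j)\,\E[I_{ij}]\,\E[N_j\mid I_{ij}=1]$ regrouped by the entering vertex $j$, followed by the same termwise comparison losing one $\beta$ on the suffix count and two on the endpoint probabilities. The only (harmless) difference is that you spell out the boundary case of the first active vertex explicitly, which the paper handles implicitly by numbering the root first.
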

\begin{proof}
Let function $f(p_1,\cdots p_n)$ denote the expected latency of $\pi$ under probabilities $\{p_v\}_{v\in V}$. For $\mathbf{q}$ and $\mathbf{\bar{p}}$ as in the lemma, we will show $f(\mathbf{q}) \ge \beta^3\cdot f(\mathbf{\bar{p}})$. To this end, we now view $f$ as the expected sum of  terms corresponding to  all possible edges used in the shortcut tour $\pi_A$ (where $A$ is the active set). Renumber the vertices as $1,2,\cdots n$ in the order of appearance in $\pi$; so the root $r$ is numbered $1$. For any $i,j\in [n]$ let $I_{ij}$ denote the indicator random variable for (ordered) edge $(i,j)$ being used in the shortcut tour $\pi_A$. For any $j\in [n]$, let $N_j$ denote the number of active vertices among $\{j,j+1,\cdots n\}$. Then, the total latency of tour $\pi_A$ is 
$$\sum_{1\le i<j\le n} d(i,j)\cdot I_{ij}\cdot N_j\;.$$
Under probabilities $\mathbf{q}$, for any $i<j$ we have  $\E[I_{ij}] = q_i\cdot q_j \cdot \prod_{k=i+1}^{j-1} (1-q_k)$ which corresponds to the event that $i$ and $j$ are active but all vertices between $i$ and $j$ are inactive. Moreover, $\E[N_j | I_{ij}=1] = 1+ \sum_{\ell=j+1}^n q_\ell$ using the independence across vertices. So we can write:
$$f(\mathbf{q}) = \sum_{1\le i<j\le n} d(i,j)\cdot \E[I_{ij}] \cdot \E[N_j|I_{ij}=1] = \sum_{1\le i<j\le n} d(i,j)\cdot q_i\cdot q_j \cdot \prod_{k=i+1}^{j-1} (1-q_k) \left(1+\sum_{\ell=j+1}^n q_\ell\right)\;.$$
Note that for any $i<j$, using the fact that $\beta \cdot \mathbf{\bar{p}} \le \mathbf{q} \le \mathbf{\bar{p}}$  we have:
$$q_i\cdot q_j \cdot \prod_{k=i+1}^{j-1} (1-q_k) \left(1+\sum_{\ell=j+1}^n q_\ell\right) \,\,\ge\,\, \beta^3\cdot  \bar{p}_i\cdot \bar{p}_j \cdot \prod_{k=i+1}^{j-1} (1-\bar{p}_k) \left(1+\sum_{\ell=j+1}^n \bar{p}_\ell\right).$$
This implies $f(\mathbf{q}) \ge \beta^3\cdot f(\mathbf{\bar{p}})$ as desired.  
\end{proof}

\begin{lemma}\label{lem:optineq} Instances $\mathcal{I}$ and  ${\mathcal{J}}$  in Algorithm~\ref{const} satisfy
$$\opt(\mathcal{J})\leq \left(\frac{e}{e-1}\right) \left(1+\frac1n\right)^4\cdot \opt(\mathcal{I})\;.$$
\end{lemma}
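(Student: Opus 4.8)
The plan is to construct a \emph{consecutive} master tour for $\mathcal{J}$ whose expected latency is at most $\frac{e}{e-1}\left(1+\frac1n\right)^4\opt(\mathcal{I})$; since $\opt(\mathcal{J})$ is no larger than the cost of any feasible master tour, this proves the lemma. First I would take an optimal master tour $\tau^*$ for $\mathcal{I}$ and let $\widehat{\tau}$ be the master tour for $\mathcal{J}$ that visits the groups $\{S_v\}_{v\in V}$ in the order $\tau^*$ visits $V$, traversing all (co-located) copies of each group consecutively. Because the copies inside any group lie at mutual distance $0$, shortcutting $\widehat{\tau}$ over inactive copies has a clean description: a group $S_v$ is ``hit'' exactly when at least one of its $t_v$ copies is active, an event of probability $q_v=1-(1-p)^{t_v}$ that is independent across $v$; and whenever $S_v$ is hit, all of its active copies have the same latency $L_v$, namely the latency of the location of $v$ along the tour that visits the hit groups in $\tau^*$-order. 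Thus the hit groups form a random set $B\subseteq V$ in which each $v$ appears independently with probability $q_v$, and $L_v=\lat_{\tau^*}^B(v)$ for $v\in B$.

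Let $N_v$ be the number of active copies of $S_v$ and put $\widetilde L_v := L_v\cdot\mathbf{1}[v\in B]$; then $\lat_{\widehat\tau}^{\mathcal{J}}=\sum_{v\in V}N_v\widetilde L_v$, since the $v$-th term vanishes when $S_v$ is not hit (then $N_v=0$). I would evaluate $\E[N_v\widetilde L_v]$ by conditioning on $\{v\in B\}$: given this event, $N_v$ depends only on the coin flips inside $S_v$ and has mean $t_vp/q_v$ (the mean of a $\mathrm{Bin}(t_v,p)$ variable conditioned on being positive), while $\widetilde L_v=L_v$ depends only on the coin flips of the other groups; by independence across vertices these are conditionally independent, so $\E[N_v\widetilde L_v]=\frac{t_vp}{q_v}\,\E[\widetilde L_v]$. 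Using $t_vp\le p_v(1+\frac1n)$ (as $p\le p_v/n$ and $t_v=\lceil p_v/p\rceil$) and $q_v\ge(1-\frac1e)p_v$ from Lemma~\ref{lem:alphatimes}, we get $\frac{t_vp}{q_v}\le\frac{e}{e-1}(1+\frac1n)$, and summing over $v$,
$$\E\!\left[\lat_{\widehat\tau}^{\mathcal{J}}\right]=\sum_{v\in V}\frac{t_vp}{q_v}\,\E[\widetilde L_v]\;\le\;\frac{e}{e-1}\Big(1+\tfrac1n\Big)\sum_{v\in V}\E[\widetilde L_v]\,.$$
Here $\sum_{v\in V}\E[\widetilde L_v]=\E_B[\lat_{\tau^*}^B]$ is exactly the expected latency of $\tau^*$ on $(V,d)$ under the distribution $\mathbf{q}$, i.e. its cost in instance $\ik$.

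It remains to bound the cost of $\tau^*$ under $\mathbf{q}$ by $(1+\frac1n)^3\opt(\mathcal{I})$. Since $q_v\le\bar p_v\le1$ by Lemma~\ref{lem:alphatimes}, monotonicity of the expected latency in the probabilities (Lemma~\ref{lem:probcost-dec}) shows the cost of $\tau^*$ under $\mathbf{q}$ is at most its cost under $\mathbf{\bar p}$. Moreover $p_v/\bar p_v\ge(1+\frac1n)^{-1}$ in all cases: this ratio equals $(1+\frac1n)^{-1}$ when $(1+\frac1n)p_v\le1$, and when $\bar p_v=1$ we have $p_v>(1+\frac1n)^{-1}$. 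Hence Lemma~\ref{lem:probcost-inc}, applied with the pair $(\mathbf{p},\mathbf{\bar p})$ and $\beta=(1+\frac1n)^{-1}$, yields that the cost of $\tau^*$ under $\mathbf{\bar p}$ is at most $\beta^{-3}=(1+\frac1n)^3$ times its cost under $\mathbf{p}$, which is $\opt(\mathcal{I})$. Chaining the three estimates gives $\opt(\mathcal{J})\le\E[\lat_{\widehat\tau}^{\mathcal{J}}]\le\frac{e}{e-1}(1+\frac1n)^4\opt(\mathcal{I})$.

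The main obstacle I anticipate is the conditional-independence step for $\E[N_v\widetilde L_v]$: one must use carefully that the event ``$S_v$ is hit'' is itself a function of the coin flips inside $S_v$, so that after conditioning on it the count $N_v$ and the latency $\widetilde L_v$ genuinely decouple, and that the per-group ratios $t_vp/q_v$ may safely be replaced by their common upper bound before summing. Everything else is routine bookkeeping with Lemmas~\ref{lem:alphatimes}, \ref{lem:probcost-dec} and \ref{lem:probcost-inc} and elementary estimates on $t_v$.
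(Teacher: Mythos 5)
Your proposal is correct and follows essentially the same route as the paper: the same independence/conditioning argument bounds the expected latency of the consecutive tour in $\mathcal{J}$ by $\frac{e}{e-1}\left(1+\frac1n\right)$ times the expected latency of the underlying tour on $(V,d)$ under the probabilities $\{q_v\}$, and the same applications of Lemmas~\ref{lem:alphatimes}, \ref{lem:probcost-dec} and \ref{lem:probcost-inc} supply the remaining $\left(1+\frac1n\right)^3$ factor. The only immaterial difference is that you fix the optimal tour of $\mathcal{I}$ throughout and lift it to $\mathcal{J}$, whereas the paper first passes to the optimum of the intermediate instance with probabilities $\{q_v\}$ and lifts that tour instead.
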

\begin{proof}
Recall the three instances $\ii=\ii_p$, $\ik$ and $\ib$ on the metric $(V,d)$. Using $\mathbf{q}\le \mathbf{\bar{p}}$ (Lemma~\ref{lem:alphatimes}) and Lemma~\ref{lem:probcost-dec} we have $\opt(\ik)\le \opt(\ib)$. Further, using 
$\mathbf{p}\le \mathbf{\bar{p}}  \le (1+1/n)\mathbf{p}$ and Lemma~\ref{lem:probcost-inc} we have $\opt(\ib)\le (1+1/n)^3 \opt(\ii_p)$. So we obtain $\opt(\ik)\le (1+1/n)^3 \cdot \opt(\ii)$.

\def\e{{\cal E}}

For $\alpha=\frac{e}{e-1}(1+\frac1{n})$, we will show that $\opt(\ij)\le \alpha\cdot \opt(\ik)$ which would prove the lemma. Recall that instance \ij is defined on the ``scaled'' vertex set $\widehat{V}= \cup_{v\in V} S_v$. Let $\pi$ be an optimal master tour for instance \ik and $\widehat{\pi}$ be its corresponding master tour for \ij: i.e. $\widehat{\pi}$ visits each group $S_v$ consecutively at the point when $\pi$ visits $v$. It suffices to  show that the expected latency $\elat_{\widehat{\pi}}$ of tour  $\widehat{\pi}$  for \ij is at most $\alpha\cdot \elat_\pi$, where $\elat_\pi$ is the expected latency of tour $\pi$ for \ik.  

Let $A\sse V$ and $\widehat{A}\sse \hV$ denote the random active subsets in the instances \ik and \ij respectively. For any $v\in V$,  
let $\e_v$ denote the event that $S_v\cap \widehat{A}\ne \emptyset$; note that these events are independent. Moreover,  for any $v\in V$,    $\Pr_{\widehat{A}}[\e_v] = \Pr_{\widehat{A}}[S_v\cap \widehat{A}\ne \emptyset] = q_v = \Pr_A[v\in A]$. Let $\elat_{\widehat{\pi}}(w) = \E_{A\gets \Pi}\left[  \sum_{v\in S_w} \lat_{\widehat{\pi}}^A(v)\right]$ denote the total expected latency of vertices of $S_w$ in tour $\widehat{\pi}$. Fix any vertex $w\in V$: we will show that   $\elat_{\widehat{\pi}}(w)$   is at most  $\alpha\cdot \elat_\pi(w)$, where $\elat_\pi(w)$ is the expected latency of vertex $w$ in $\pi$.  Summing over $w\in V$, this would imply $\elat_{\widehat{\pi}}\le \alpha\cdot \elat_\pi$, and hence $\opt(\ij)\le \alpha\cdot \opt(\ik)$.  

Consider now a fixed $w\in V$. Note that the probability distribution of the vertices in $V\setminus\{w\}$ whose groups (in $\hV$) have at least one  vertex in $\widehat{A}$ is {\em identical} to that of $A\setminus\{w\}$. In other words, the random subset $\{v\in V\setminus \{w\}: \e_v \mbox{ occurs for } \widehat{A}\subseteq \widehat{V}\setminus S_w\}$ has the same  distribution as random subset  $A\setminus \{w\}$. Below, we couple these two distributions: We {\em condition} on the events $\e_v$  for all $v\in V\setminus\{w\}$ (for tour $\widehat{\pi}$) which corresponds to  conditioning on $A\setminus\{w\}$ being active (for tour $\pi$).  Under this conditioning (denoted $\e$), the latency of any active $S_w$ vertex in $\widehat{\pi}$ is deterministic and equal to the latency of  $w$ (if it is active) in $\pi$; let $L(\pi, w\,|\, \e)$ denote this deterministic value.
So the conditional expected latency of $w$ is $L(\pi, w\,|\, \e)\cdot \Pr[w\in A] = L(\pi, w| \e)\cdot q_w$ where we used the independence of $A\setminus \{w\}$ and the event $w\in A$.  Similarly, the total conditional expected latency of $S_w$ in $\widehat{\pi}$ is $$L(\pi, w| \e )\cdot \E[|\widehat{A}\cap S_w|]  = L(\pi, w| \e)\cdot (p t_w) \le L(\pi, w| \e )\cdot (p_w+p)\;.$$
The equality above  uses the independence of $\{\e_v :  v\in V\setminus \{w\}\}$ and $\widehat{A}\cap S_w$, and the inequality uses $t_w=\lceil p_w/p\rceil$. Thus, the total conditional expected latency of $S_w$ in $\widehat{\pi}$ is at most $\frac{p_w+p}{q_w}$ times the conditional expected latency of $w$ in $\pi$. Deconditioning, we obtain 
$\elat_{\widehat{\pi}}(w) \le \frac{p_w+p}{q_w} \cdot \elat_{ \pi }(w)$. Using Lemma~\ref{lem:alphatimes},   
$\frac{p_w+p}{q_w} \le \frac{e}{e-1}\left(1+p/p_w\right)\le \frac{e}{e-1}(1+1/n)=\alpha$. So $\lat_{\widehat{\pi}}(w) \le \alpha\cdot \lat_{ \pi }(w)$ as needed.~ 

\end{proof}

\begin{lemma}\label{lem:algineqcomp} Consider any consecutive master tour $\widehat{\pi}$ on instance ${\cal J}$ with expected latency $\alg({\cal J})$.  
Then the expected latency of the resulting master tour $\pi$ on instance ${\cal I}$ is 
$$\alg({\cal I}) \le \left(\frac{e}{e-1}\right)^3\left(1+\frac1{n}\right)^3 \cdot \alg({\cal J})\;.$$   
\end{lemma}
\begin{proof}
Let $\alg(\ii_p)$, $\alg(\ik)$ and $\alg(\ib)$ denote the expected latency of master tour $\pi$ under probabilities $\mathbf{p}$, $\mathbf{q}$ and $\mathbf{\bar{p}}$ respectively.  Below we use $\alpha=\frac{e}{e-1}(1+\frac1{n})$.   Using  $\mathbf{p}\le \mathbf{\bar{p}}$  and Lemma~\ref{lem:probcost-dec} we have $\alg(\ii_p)\le \alg(\ii_{\bar{p}})$. Using $\frac1\alpha\cdot \mathbf{\bar{p}} \le \mathbf{q}\le \mathbf{\bar{p}}$ (Lemma~\ref{lem:alphatimes}) and Lemma~\ref{lem:probcost-inc}, we have $\alg(\ii_{\bar{p}}) \le  \alpha^3 \cdot \alg(\ik)$. Combining these bounds, we have $\alg(\ii)\le  {\alpha^3}\cdot \alg(\ik)$. Finally, it is easy to see that $\alg(\ik)\le\alg(\ij)$ as the probability of having at least one active vertex in  group $S_v$ (for any $v\in V$) in \ij is exactly equal the probability ($q_v$) of visiting $v$ in \ik.  
\end{proof}

\subsection{Overall analysis including  vertices in $Y$.}  \label{subsec:y}

Now we have the  tools to finish the proof of Theorem~\ref{thm:apTRP-redn} assuming the tour $\widehat{\pi}$ in \ij is consecutive. Recall that $\pi$ is the tour corresponding to $\widehat{\pi}$ on vertices $X$ and $\bar{\pi}$ is the extended tour that also visits the vertices $Y$. 

First, the analysis for the vertices $X$ (Lemmas \ref{lem:optineq} and \ref{lem:algineqcomp}) yields:
\begin{corollary}\label{cor:normal-lat}
The tour $\pi$ on vertices $X$ satisfies 
$$\E_A\left[\sum_{v\in A\cap X} \lat^A_\pi(v)\right] \le (1+o(1))\left(\frac{e}{e-1}\right)^4 \rho \cdot \opt_X\;,$$ 
where $\rho$ is the approximation ratio for uniform {\em a priori} TRP and $\opt_X$ is the optimal value of the instance restricted to vertices $X$. 
\end{corollary}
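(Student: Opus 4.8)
The plan is to chain together the two main lemmas already proved for the $X$-vertices, namely Lemma~\ref{lem:optineq} (which bounds $\opt(\ij)$ in terms of $\opt(\ii)$) and Lemma~\ref{lem:algineqcomp} (which bounds $\alg(\ii)$ in terms of $\alg(\ij)$), inserting the approximation guarantee $\rho$ for uniform {\em a priori} TRP in between. Concretely: the master tour $\widehat\pi$ returned by the uniform algorithm on $\ij$ has expected latency $\alg(\ij)\le \rho\cdot\opt(\ij)$. Then Lemma~\ref{lem:algineqcomp} gives $\alg(\ii)\le\left(\frac{e}{e-1}\right)^3\left(1+\frac1n\right)^3\alg(\ij)$, and Lemma~\ref{lem:optineq} gives $\opt(\ij)\le\left(\frac{e}{e-1}\right)\left(1+\frac1n\right)^4\opt(\ii)$. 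Multiplying the three inequalities yields
$$\alg(\ii)\;\le\;\left(\tfrac{e}{e-1}\right)^4\left(1+\tfrac1n\right)^7\,\rho\cdot\opt(\ii).$$
Since $X=V$ is being assumed throughout Section~\ref{subsec:x}, $\opt(\ii)$ is exactly $\opt_X$ and $\alg(\ii)=\E_A\!\left[\sum_{v\in A\cap X}\lat^A_\pi(v)\right]$, so absorbing $\left(1+\tfrac1n\right)^7=1+o(1)$ into the stated $(1+o(1))$ factor completes the proof.

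The only slightly delicate point is bookkeeping about \emph{which} instance each quantity refers to. In the statement of the corollary, $\pi$ is the tour on $X$ obtained from the consecutive tour $\widehat\pi$ on $\ij$, and $\opt_X$ is the optimum of the non-uniform instance restricted to $X$ — which, under the running assumption $X=V$, is precisely $\opt(\ii)$. I should make explicit that the quantity $\alg(\ij)$ appearing in Lemma~\ref{lem:algineqcomp} is the expected latency of the \emph{consecutive} tour $\widehat\pi$, and that this is exactly the object produced by running the uniform algorithm and then \textsc{MakeConsecutive} (by the analysis of Section~\ref{subsec:consec}, making consecutive does not increase expected latency, so $\alg(\ij)\le\rho\cdot\opt(\ij)$ still holds). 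Everything else is a direct substitution.

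I do not expect a real obstacle here — this corollary is essentially a one-line consequence of stitching Lemmas~\ref{lem:optineq} and~\ref{lem:algineqcomp} around the $\rho$-approximation. The one thing to be careful about is not double-counting the $(1+1/n)$ factors: Lemma~\ref{lem:optineq} contributes four of them and Lemma~\ref{lem:algineqcomp} contributes three, for a total of seven, all of which tend to $1$ as $n\to\infty$ and are safely hidden in the $(1+o(1))$. I would write the proof as the three-inequality chain above followed by the identification $\opt(\ii)=\opt_X$, and stop there.
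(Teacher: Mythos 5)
Your proposal is correct and matches the paper's intended derivation: the corollary is stated in the paper as a direct consequence of Lemmas~\ref{lem:optineq} and~\ref{lem:algineqcomp} chained around the $\rho$-approximation guarantee on $\ij$, exactly as you do, and your bookkeeping (the $(1+1/n)^7$ factor absorbed into $(1+o(1))$, the identification $\opt(\ii)=\opt_X$ under the running assumption $X=V$, and using Theorem~\ref{thm:consec} so that the consecutive tour still satisfies $\alg(\ij)\le\rho\cdot\opt(\ij)$) is all accurate.
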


After extending tour $\pi$ to $\bar{\pi}$, we can write the final expected latency as  
{\small\begin{equation} \label{eq:overall-lat}
\alg({\cal I})  = \E_{A}\left[\sum_{v\in A\cap X}\lat_{\bar{\pi}}^{A}(v) + \sum_{v\in A\cap Y}\lat_{\bar{\pi}}^{A}(v)\right] =  \E_{A}\left[\sum_{v\in A\cap X}\lat_{\pi}^{A}(v)\right] + \E_{A}\left[\sum_{v\in A\cap Y}\lat_{\bar{\pi}}^{A}(v)\right]
\end{equation}}
where $A\sse V$ is the active subset. The last equality uses the fact that $\bar{\pi}$ visits all vertices of $X$ (along $\pi$) before $Y$. The first term above can be bounded 
by Corollary~\ref{cor:normal-lat}. We now focus on the second term involving vertices $Y$. 

Let $L$ denote the length of tour $\bar{\pi}$ before visiting the first $Y$-vertex; note that this is a random variable. Clearly  $\E[L]$ is at most the expected total latency of the $X$-vertices. Consider any $v\in Y$:  by the ordering of the $Y$-vertices in   master tour   $\bar{\pi}$,
$$\lat^A_{\bar{\pi}}(v) \le \left(L + (2N_v+1)\cdot d(r,v)\right)\cdot \mathbf{1}_{v\in A}\;,$$
where $N_v$ is the number of active $Y$-vertices   appearing before $v$. Taking expectations, 
\begin{align*}
\E[\lat^A_{\bar{\pi}}(v)] & \le p_v\cdot \E[L] + p_v\cdot d(r,v)\cdot (2 \E[N_v] + 1) \le p_v\cdot \E[L] + p_v\cdot d(r,v)\cdot (2n\cdot \frac1{n^2} + 1) \\
& =  p_v\cdot \E[L] + p_v\cdot d(r,v)\cdot (1+o(1)),
\end{align*}
The first inequality uses the fact that  $L$, $N_v$ and $\mathbf{1}_{v\in A}$ are independent. The second inequality uses that   $N_v$ is the sum of at most $n$ Bernoulli random variables each with probability at most $\frac1{n^2}$.

Summing over all $v\in Y$, we obtain
$$\E_{A}\left[\sum_{v\in A\cap Y}\lat_{\bar{\pi}}^{A}(v)\right]\le \left(\sum_{v\in Y} p_v\right) \cdot \E[L] + (1+o(1))\sum_{v\in Y} p_v\cdot d(r,v)\le \frac1n\cdot \E[L] + (1+o(1))\sum_{v\in Y} p_v\cdot d(r,v)\;,$$
where the last inequality uses $p_v\le 1/n^2$ for all $v\in Y$. 

Let $E_X$ denote the expected latency of the $X$-vertices: this is the first term in the right-hand-side of \eqref{eq:overall-lat}. Recall that $\E[L]\le E_X$. Using the above bound on the latency of $Y$-vertices, 
\begin{eqnarray}
\alg({\cal I}) &\le& E_X + \frac1n \cdot E_X + (1+o(1))\sum_{v\in Y} p_v\cdot d(r,v) = (1+o(1))\left( E_X + \sum_{v\in Y} p_v\cdot d(r,v) \right) \notag \\
&\le & (1+o(1))\left(\frac{e}{e-1}\right)^4 \rho \cdot \left( \opt_X + \sum_{v\in Y} p_v\cdot d(r,v) \right)\label{eq:alg-bound1} \\
&\le & (1+o(1))\left(\frac{e}{e-1}\right)^4 \rho \cdot \opt.\label{eq:alg-bound2}
\end{eqnarray}
Above, inequality~\eqref{eq:alg-bound1} uses Corollary~\ref{cor:normal-lat}. Inequality \eqref{eq:alg-bound2} uses the fact that the latency contribution of $Y$-vertices in {\em any} master tour is at least $\sum_{v\in Y} p_v\cdot d(r,v)$ and the latency of $X$-vertices is clearly at least $\opt_X$. This completes the proof of Theorem~\ref{thm:apTRP-redn} assuming that $\widehat{\pi}$ visits each group $S_v$ consecutively. The next subsection shows that this consecutive property can always be ensured.

\subsection{Ensuring the Consecutive Property.} \label{subsec:consec}
The main result here is:
\begin{theorem}\label{thm:consec}
Consider any instance ${\cal J}$ of uniform {\em a priori} TRP on vertices $\cup_{v\in X} S_v$ where  the vertices in $S_v$ are co-located for all $v\in X$. There is a polynomial time algorithm that given any master tour $\tau$, modifies it into  a consecutive  tour having expected latency at most that of $\tau$.
\end{theorem}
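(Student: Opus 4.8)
The plan is to prove Theorem~\ref{thm:consec} by an iterative \emph{block-merging} procedure. Number the positions of $\tau$ in tour order, and call a group $S_v$ \emph{split} if its copies do not occupy consecutive positions; a split group decomposes into at least two maximal runs of copies. \textsc{MakeConsecutive} repeats the following while some group is split: pick a split group $S_v$, let its first two maximal runs have $s$ and $t$ copies, and let $P$ be the prefix ending just before the first run (necessarily $v$-free), $B$ the block strictly between the two runs (necessarily $v$-free and nonempty), and $Q$ the suffix starting just after the second run. Form the candidates $M_1$ (visit $P$, then all $s{+}t$ copies of $v$, then $B$, then $Q$) and $M_2$ (visit $P$, then $B$, then all $s{+}t$ copies of $v$, then $Q$), compute the expected latency of each (polynomial time via the position-pair expansion of $\elat$ used in the proof of Lemma~\ref{lem:probcost-inc}), and replace $\tau$ by the cheaper one. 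Since $P,B,Q$ keep their contents, each step merges the first two runs of $S_v$ and, by making $B$ adjacent to $P$ or to $Q$, can only merge runs of other groups; hence the total number of maximal runs strictly decreases, so after $\le|\widehat V|$ steps the procedure halts with a consecutive tour.

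Correctness reduces to the lemma: \emph{with $P,B,Q$ as above and $M:=s+t$, the expected latency $g(i)$ of the tour ``$P$, then $i$ copies of $v$, then $B$, then $M{-}i$ copies of $v$, then $Q$'' is a concave function of $i$ on $\{0,1,\dots,M\}$.} Given this, $g(s)\ge\frac tM g(0)+\frac sM g(M)\ge\min\{g(0),g(M)\}$, i.e.\ $\min\{\elat_{M_1},\elat_{M_2}\}\le\elat_\tau$; so every step is non-increasing and the returned tour has expected latency at most $\elat_\tau$.

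To prove the lemma I would decompose $\E[\text{latency}]$ into the contributions of the vertices of $P$, of the first $v$-run, of $B$, of the second $v$-run, and of $Q$. Let $E_1$ (resp.\ $E_2$) be the event that at least one of the first $i$ (resp.\ last $M-i$) copies of $v$ is active, so $\Pr[E_1]=1-(1-p)^i$ and $\Pr[E_2]=1-(1-p)^{M-i}$; since the $v$-copies' activations are independent of everything in $P,B,Q$, the distribution of the ``entry state'' (latency and location) with which the tour enters $B$, and then $Q$, depends on $i$ only through which of $E_1,E_2$ hold. The $P$-contribution is independent of $i$; the first-run contribution is $i\,p\,\E[\Lambda_1]$, affine in $i$, where $\Lambda_1$ is the arrival time at the $v$-location on its first visit. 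I claim each of the three remaining contributions has the form $(\text{const})-\kappa\,\phi(i)$ with $\kappa\ge 0$ and $\phi$ nonnegative and \emph{convex} in $i$: for $B$, $\phi(i)=(1-p)^i$ and $\kappa=\E[B\text{-lat}\mid E_1]-\E[B\text{-lat}\mid\neg E_1]\ge 0$ because entering $B$ via the $v$-location only lengthens arrival times (triangle inequality); for the second run, a similar analysis gives a nonnegative combination of the convex functions $(1-p)^i$ and $(M-i)(1-p)^i$; and for $Q$, a nonnegative combination of $(1-p)^i$ and $(1-p)^{M-i}$. Summing these concave pieces yields the lemma.

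The crux is the $Q$-contribution, because $Q$ may be entered from the $v$-location (when $E_2$ holds) or straight from the last active vertex of $B$ (when it does not), and in either case the arrival latency also depends on whether $E_1$ held; a priori this produces a product $\Pr[E_2]\cdot(\text{function of }\Pr[E_1])$ which need not be concave. The observation that makes it work is that the \emph{extra} cost of routing $Q$ through the $v$-location instead of directly — a quantity determined only by the last active vertex of $B$, the first active vertex of $Q$, and $|A\cap Q|$, none of which involves the $v$-copies — is the same regardless of whether $E_1$ occurred; hence that product collapses to $\Pr[E_2]$ times a constant, leaving exactly the required ``const minus nonnegative-times-convex'' shape. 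Verifying this independence carefully, together with the triangle-inequality bookkeeping for the $B$- and second-run contributions, is the only real work; the rest is routine.
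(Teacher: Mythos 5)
Your procedure and its key inequality are essentially the paper's own: your $M_1,M_2$ are exactly the tours $\tau_i,\tau_j$ of \textsc{MakeConsecutive} (Algorithm~\ref{makeconsecutive}) applied to the first two runs of $S_v$, and your concavity lemma is Lemma~\ref{lem:visitingconsec} in different packaging. The paper conditions on the active set outside the two runs, tabulates the five conditional contributions ($B_1,B_2,B_3,C_z^i,C_z^j$ in Table~\ref{latencies}), and exhibits an explicit $\lambda=\frac{1-(1-p)^{k_i}}{1-(1-p)^{k_i+k_j}}$ with $\elat_\tau\ge\lambda\,\elat_{\tau_i}+(1-\lambda)\,\elat_{\tau_j}$; you prove instead that $g(i)$ is concave from the same conditional decomposition and interpolate with weights $s/M,\,t/M$. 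That packaging is a genuine (if small) simplification: it dispenses with the appendix verification of the three constraints on $\lambda$ (Appendix~\ref{app:lambda}), and conditioned on an active in-between vertex your pieces are indeed constants, the affine terms $i\,p\,\E[\Lambda_1]$ and $(M-i)p(\cdot)$, and constants minus nonnegative multiples of the convex functions $(1-p)^i$, $(M-i)(1-p)^i$, $(1-p)^{M-i}$, matching the table entries.

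The one step that fails as literally stated is your independence claim for the $Q$- and second-run contributions: that the extra cost of the second detour is ``determined only by the last active vertex of $B$'' and hence does not depend on $E_1$. This is true only when at least one vertex strictly between the two runs is \emph{active}. If none is, there is no ``last active vertex of $B$'': after shortcutting, the two visits to the $v$-location coalesce, the second-run copies incur no extra detour when $E_1$ holds, and the $Q$-vertices pay the detour on the event $E_1\cup E_2$ (probability $1-(1-p)^{M}$) rather than additively as $\Delta_1\Pr[E_1]+\Delta_2\Pr[E_2]$ — precisely the product form you were worried about. The paper isolates exactly this situation as the case $B_2=\emptyset$ in the proof of Lemma~\ref{lem:visitingconsec}, where all three tours coincide after shortcutting. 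In your framing the repair is equally easy: conditioned on no active in-between vertex, the expected latency is constant in $i$, hence still concave; but this case must be split off before asserting the ``constant minus nonnegative-times-convex'' form. With that case added (and the deferred bookkeeping carried out, which goes through as in Table~\ref{latencies}), your argument is complete.
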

While this result is intuitive, we note that it is not obvious to prove. This is because an optimal TRP solution can be fairly complicated even on simple metrics: for example, the optimum may cross itself several times on a line-metric~(\cite{AfratiCPPP86}) and the problem is NP-hard even on tree-metrics~(\cite{sitters2002minimum}).

Algorithm~\ref{makeconsecutive} describes the procedure used to establish Theorem~\ref{thm:consec}. We use $\Pi$ to denote the  distribution of active vertices, where each vertex has independent probability $p$. 

It is obvious that  each iteration of the while-loop decreases the number $k$ of parts of $S_z$: so this procedure ends in polynomial time and produces  a master tour that visits each $S_v$ consecutively. The key part of the proof is in showing that the expected latency does not increase.

\begin{algorithm}[H]
\caption{Algorithm to obtain a consecutive master tour.}\label{makeconsecutive}
\begin{algorithmic}[1]
\Statex \textbf{Procedure}\textsc{MakeConsecutive}({$\tau$}):
\For{$z\in V$}
    \State Let $C_z^1, C_z^2, ..., C_z^k$ be the minimal partition of $S_z$, where for every $i\in [k]$, the  vertices in $C_z^i$ appear consecutively  in tour $\tau$.
    \While{there exists $C_z^i$ and $C_z^j$ with $i \neq j$}
    \State Construct tour $\tau_i$ from $\tau$ by relocating vertices $C_z^j$ immediately after $C_z^i$
    \State Construct tour $\tau_j$ from $\tau$ by relocating vertices $C_z^i$ immediately before $C_z^j$
    \State $\tau\leftarrow \text{argmin}_{\tau\in\{\tau_i,\tau_j\}}\E_{A\gets \Pi}\left[ \lat_{\tau}^A\right]$
    \State Update $k\gets k-1$ and the  new partition  of $S_z$.
\EndWhile
\EndFor
\end{algorithmic}
\end{algorithm}

\begin{figure}[htp]
\centering
\includegraphics[scale = 0.5]{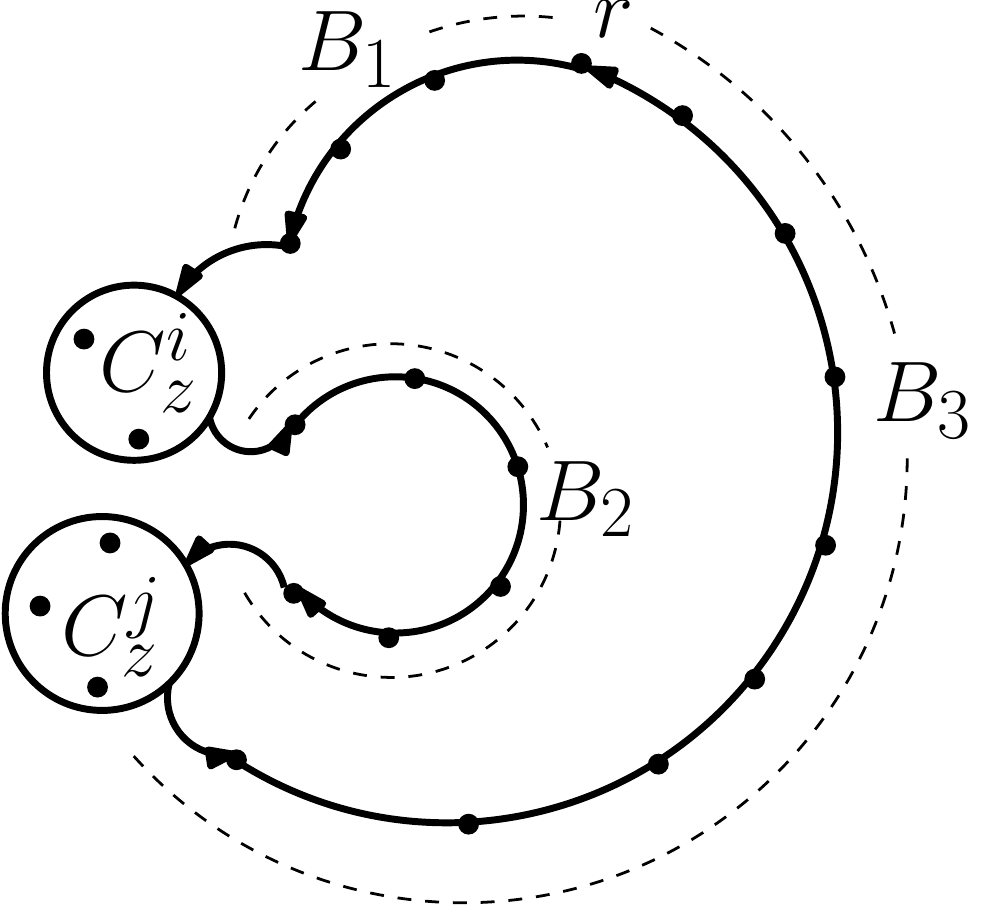}\hfill
\includegraphics[scale = 0.5]{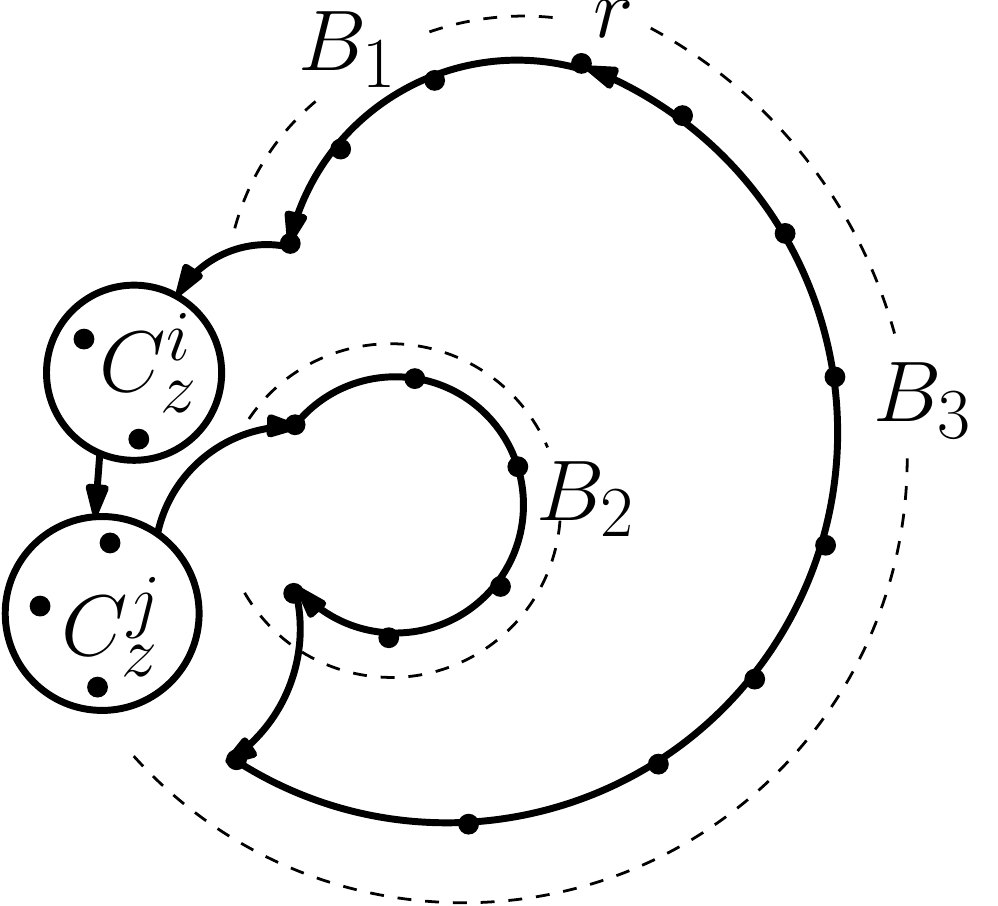}\hfill
\includegraphics[scale = 0.5]{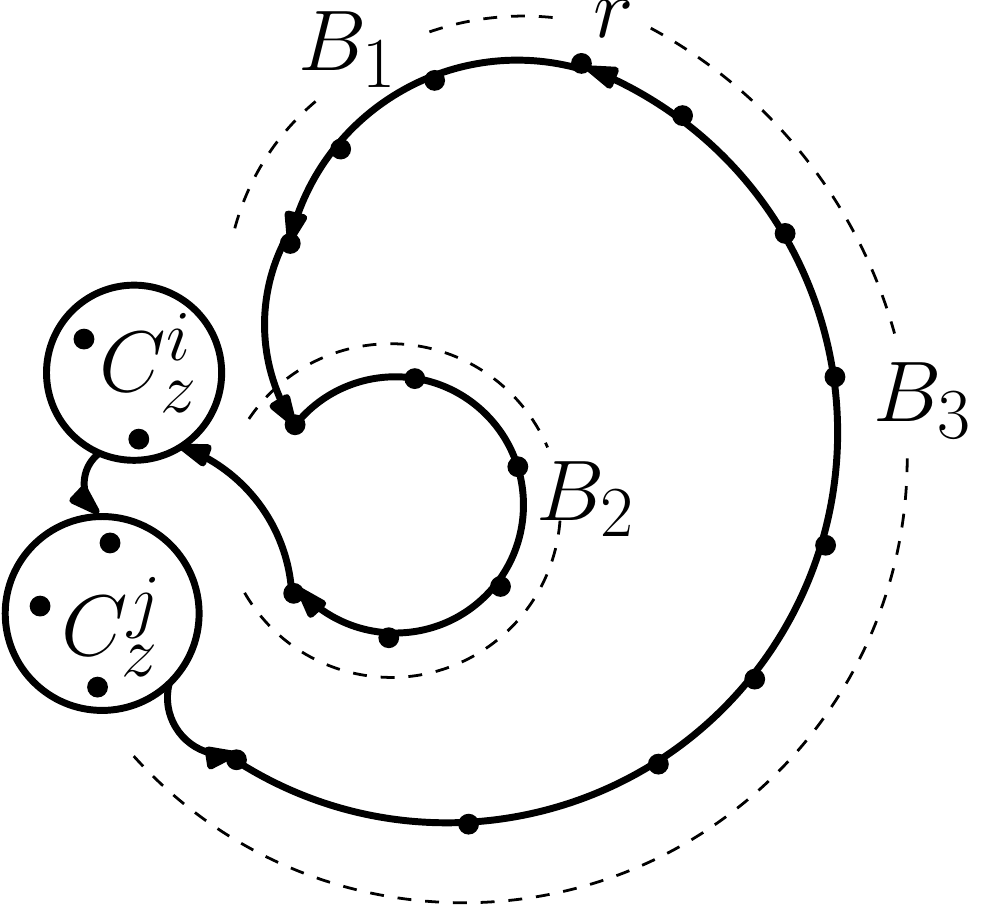}
\caption{From left to right: Tours $\tau, \tau_i$ and $\tau_j$}\label{tours}
\end{figure}

\begin{lemma} Let $C_{z}^i$ and $C_{z}^j$ be two parts of  $S_{z}$ with respect to the current tour $\tau$ in procedure \textsc{MakeConsecutive}. Then we have:
$$\E_{A\gets \Pi}\left[ \lat_{\tau}^A\right]\geq \min(\E_{A\gets \Pi}\left[ \lat_{\tau_i}^A\right] , \E_{A\gets \Pi}\left[ \lat_{\tau_j}^A\right]) \;.$$\label{lem:visitingconsec} 
\end{lemma}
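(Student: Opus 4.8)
\subsection*{Proof plan for Lemma~\ref{lem:visitingconsec}}

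The plan is to realize $\tau$, $\tau_i$ and $\tau_j$ as three members of a single one-parameter family of tours and to show that the expected latency is a \emph{concave} function of the parameter; since $\tau$ sits strictly between $\tau_i$ and $\tau_j$ in this family, concavity forces $\E[\lat_\tau^A]$ to be at least the minimum of the two endpoint values. Assume without loss of generality that $C_z^i$ precedes $C_z^j$ in $\tau$. Write $W:=C_z^i\cup C_z^j$, $s:=|W|$, $a:=|C_z^i|\in\{1,\dots,s-1\}$, and let $B$ be the (possibly empty) set of vertices strictly between $C_z^i$ and $C_z^j$ in $\tau$, so that $\tau=P\cdot C_z^i\cdot B\cdot C_z^j\cdot Q$ as an ordered sequence. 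Fix an arbitrary labeling $W=\{w_1,\dots,w_s\}$ with $C_z^i=\{w_1,\dots,w_a\}$, and for $0\le n\le s$ let $\tau^{(n)}:=P\cdot\{w_1,\dots,w_n\}\cdot B\cdot\{w_{n+1},\dots,w_s\}\cdot Q$. Then $\tau=\tau^{(a)}$, while $\tau_i=\tau^{(s)}$ (all copies of $z$ moved before $B$) and $\tau_j=\tau^{(0)}$ (all copies of $z$ moved after $B$). Hence it suffices to prove that $n\mapsto\Phi(n):=\E_{A\gets\Pi}[\lat_{\tau^{(n)}}^A]$ is concave on $\{0,1,\dots,s\}$.

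First I would separate the randomness. Condition on $\mathcal A:=A\setminus W$, the set of active vertices outside $W$; note that $\Pi$, and hence the law of $\mathcal A$, does not depend on $n$. Given $\mathcal A$, the shortcut tour of $\tau^{(n)}$ is completely determined once we know how many copies of $z$ are active on each side of $B$: since all of $W$ is co-located at $z$ and inactive copies are shortcut away, if $\xi$ of the copies before $B$ and $\eta$ of the copies after $B$ are active, the resulting total latency is a value $G(\mathcal A,\xi,\eta)$ that otherwise does not depend on $n$. Under $\Pi$ the number of active copies before $B$ is $\xi\sim\mathrm{Bin}(n,p)$ and after $B$ is $\eta\sim\mathrm{Bin}(s-n,p)$, independently, so $\Phi(n)=\E_{\mathcal A}\,\E_{\xi\sim\mathrm{Bin}(n,p),\ \eta\sim\mathrm{Bin}(s-n,p)}\big[G(\mathcal A,\xi,\eta)\big]$.

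The heart of the proof is the structural claim that, for every fixed $\mathcal A$, the map $\xi\mapsto G(\mathcal A,\xi,g-\xi)$ is concave on $\{0,1,\dots,g\}$ for each integer $g\ge0$; equivalently $G(\mathcal A,\xi+2,\eta)-2G(\mathcal A,\xi+1,\eta+1)+G(\mathcal A,\xi,\eta+2)\le0$ for all $\xi,\eta\ge0$. To prove this I would write the latency of a shortcut tour as $\sum_e \mathrm{len}(e)\cdot\#\{\text{active vertices reached after edge }e\}$, summed over the edges $e$ of the tour. With $g=\xi+\eta$ held fixed, every edge's contribution is affine in $\xi$ \emph{except} for the (at most) two edges at which the tour reaches $z$: the edge entering $z$ just before $B$ (present iff $\xi\ge1$) together with the edge from $z$ to the first active vertex of $B$, and symmetrically the edge from the last active vertex of $B$ back to $z$ (present iff $\eta\ge1$) together with the edge from $z$ to the first active vertex of $Q$. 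For each of these pairs, the ``corner-cutting'' configuration --- in which that side of $W$ has no active copy and the tour skips the detour through $z$ --- is, by the triangle inequality, no more expensive than the affine extension of the other branch; this is exactly a statement that the contribution dips below its own chord, i.e.\ is concave, and adding the remaining affine and constant pieces preserves concavity. A short case analysis covers the degenerate situations (no active vertex in $P$, in $B$, or in $Q$, or copies of $z$ lying inside $B$). I expect this structural claim, together with its boundary cases, to be the main obstacle.

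Finally I would deduce concavity of $\Phi$ from the anti-diagonal concavity of $G$ by a coupling argument. Writing $\xi=Z_1+\dots+Z_n$ and $\eta=Z_{n+1}+\dots+Z_s$ for i.i.d.\ $\mathrm{Bernoulli}(p)$ variables $Z_k$, incrementing $n$ simply moves one Bernoulli from the $\eta$-side to the $\xi$-side, and a direct computation gives $\Phi(n+1)-\Phi(n)=p\cdot\E\big[G(\mathcal A,\xi'+1,\eta')-G(\mathcal A,\xi',\eta'+1)\big]$ with $\xi'\sim\mathrm{Bin}(n,p)$ and $\eta'\sim\mathrm{Bin}(s-n-1,p)$ independent. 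Applying the same step once more --- the two cross terms cancel --- yields $\big(\Phi(n+2)-\Phi(n+1)\big)-\big(\Phi(n+1)-\Phi(n)\big)=p^2\cdot\E\big[G(\mathcal A,\xi''+2,\eta'')-2G(\mathcal A,\xi''+1,\eta''+1)+G(\mathcal A,\xi'',\eta''+2)\big]\le0$ by the structural claim, so $\Phi$ is concave. (Equivalently: conditioning additionally on $g=\xi+\eta\sim\mathrm{Bin}(s,p)$ makes $\xi$ hypergeometric, and one may invoke the fact that $n\mapsto\E_{X\sim\mathrm{Hyp}(s,n,g)}[H(X)]$ is concave whenever $H$ is concave.) Since $0\le a\le s$, concavity of $\Phi$ gives $\Phi(a)\ge\min(\Phi(0),\Phi(s))$, that is $\E_{A\gets\Pi}[\lat_\tau^A]\ge\min\big(\E_{A\gets\Pi}[\lat_{\tau_i}^A],\,\E_{A\gets\Pi}[\lat_{\tau_j}^A]\big)$, as required.
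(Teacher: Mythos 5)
Your plan is correct, and it takes a genuinely different route from the paper. The paper fixes the active set $B$ outside $U=C_z^i\cup C_z^j$, tabulates the conditional expected latencies of the five vertex classes ($B_1,B_2,B_3,C_z^i,C_z^j$) under $\tau,\tau_i,\tau_j$, and verifies directly that $\E[\lat_\tau^A]\ge \lambda\,\E[\lat_{\tau_i}^A]+(1-\lambda)\,\E[\lat_{\tau_j}^A]$ for the explicit weight $\lambda=\frac{1-(1-p)^{k_i}}{1-(1-p)^{k_i+k_j}}$, whose required properties are checked in an appendix. You instead embed the three tours in the interpolation family $\tau^{(n)}$ and prove discrete concavity of $\Phi(n)=\E[\lat_{\tau^{(n)}}^A]$, which yields the chord bound with weights $k_i/(k_i+k_j)$ --- also a convex combination, hence the claimed min inequality (and, amusingly, these weights satisfy the same three constraints the paper imposes on $\lambda$, so the two proofs are consistent). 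Your reduction of concavity of $\Phi$ to the anti-diagonal second-difference condition on $G(\mathcal A,\xi,\eta)$ via the Bernoulli coupling is clean and correct, and it buys modularity: no weight has to be guessed, and all intermediate splits are handled at once. What it does not buy is an escape from the case analysis: proving $G(\mathcal A,\xi+2,\eta)-2G(\mathcal A,\xi+1,\eta+1)+G(\mathcal A,\xi,\eta+2)\le 0$ requires essentially the same bookkeeping as the paper's table. One point your edge-based sketch underplays is the cross term: conditioned on $\mathcal A$ with a nonempty active middle segment, the latency contains a term of the form $(\eta+c)\,\Delta_i\,\mathbf{1}_{\{\xi\ge 1\}}$ (the first detour is felt by the $\eta$ active copies placed after the middle, and $\eta$ varies along the anti-diagonal), so the non-affine part is genuinely bivariate rather than a univariate ``dip below the chord''; its anti-diagonal second difference is $0$ when $\xi\ge 1$ and $-(\eta+2+c)\Delta_i\le 0$ when $\xi=0$, so the claim does hold, but this case (together with the empty-middle case, where $G$ depends only on $\xi+\eta$ and the second difference vanishes, and the possibility of other copies of $z$ inside the middle segment, which cause no trouble) should be spelled out explicitly since $\Delta_i,\Delta_j\ge 0$ by triangle inequality is all that is ultimately used.
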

\begin{proof}
Let  $|C_{z}^i| = k_i$ and $|C_{z}^j|=k_j$. 
Without loss of generality we assume that $\tau$ visits $C_{z}^i$ before $C_{z}^j$. To reduce notation we use $V$ to denote the vertex set of instance ${\cal J}$ and let $U=C_{z}^i \cup C_{z}^j$.  Recall that $\lat_{\pi}^A(w)$ is the latency of vertex $w$ in tour $\pi$ when the subset $A$ of vertices is active; also $\lat_{\pi}^A =\sum_{w\in A} \lat_{\pi}^A(w)$.
For any $R\sse V$ and $S\sse R$ we use the notation $p(S,R) = p^{|S|} \cdot (1-p)^{|R\setminus S|}$ for  the probability that $S$ is the set of active vertices amongst $R$.

It suffices to show $\E_{A\gets \Pi}\left[ \lat_{\tau}^A\right]$ is at least a convex combination of $\E_{A\gets \Pi}\left[ \lat_{\tau_i}^A\right]$ and $\E_{A\gets \Pi}\left[ \lat_{\tau_j}^A\right]$. More specifically we show that:
$$\E_{A\gets \Pi}\left[ \lat_{\tau}^A\right]\geq  \lambda \cdot \E_{A\gets \Pi}\left[ \lat_{\tau_i}^A\right] \,+\, (1-\lambda)\cdot   \E_{A\gets \Pi}\left[ \lat_{\tau_j}^A\right]\;.$$
where $\lambda\in [0,1]$ is a value that will be set later. 
The above inequality is equivalent to proving the following:
$$\sum_{\substack{A\subseteq V}}p(A,V)\lat_{\tau}^A\geq \sum_{\substack{A\subseteq V}}p(A,V)\left(\lambda \cdot \lat_{\tau_i}^A + (1-\lambda) \cdot \lat_{\tau_j}^A\right)\;.$$
Let us define $B=A\setminus U$ and $C = A\cap U$. Basically $C$ is the subset of active vertices among $U = C_z^i\cup C_z^j$, and $B$ is the subset of active vertices among the rest of $V$. Then we can re-write the above inequality as follows:
$$\hspace{-2mm}\small{\sum_{B\subseteq V\setminus U} p(B,V\setminus U) \sum_{\substack{C\subseteq U }} p(C,U)\lat_{\tau}^{B\cup C}\geq  \sum_{B\subseteq V\setminus U} p(B,V\setminus U) \sum_{\substack{C\subseteq U }} p(C,U)\left(\lambda \cdot \lat_{\tau_i}^{B\cup C}+ (1-\lambda) \cdot \lat_{\tau_j}^{B\cup C}\right)}\;.$$
 Therefore, it is enough to prove
\begin{equation}\label{weightedaverage}
\sum_{\substack{C\subseteq U}} p(C,U)\lat_{\tau}^{B\cup C}\geq \sum_{\substack{C\subseteq U}}  p(C,U)\left(\lambda \cdot  \lat_{\tau_i}^{B\cup C}+ (1-\lambda) \cdot  \lat_{\tau_j}^{B\cup C}\right),\quad \forall B\sse V\setminus U.
\end{equation}

\def\e{{\cal E}}

In the rest of this proof we fix a subset $B\sse V\setminus U$.  This can be viewed as conditioning on the event ``$B$ is the active set of vertices within $V\setminus U$''; we denote this event by $\e_B$. Let the order of visited vertices of $B\cup U$ in $\tau$ be $B_1, C_z^i, B_2, C_z^j, B_3$ where $B_1, B_2, B_3$ are ordered sets of vertices that form a partition of $B$. Therefore, together with $C_z^i$ and $C_z^j$ they form a partition of $B\cup U$. See Figure~\ref{tours} for an example. 

If $B_2=\emptyset$ then all three tours $\tau$, $\tau_i$ and $\tau_j$ become identical when restricted to $B\cup C$ for any $C\sse U$. So \eqref{weightedaverage} is satisfied with an equality in this case. Below we assume $B_2 \neq \emptyset$. 
We will prove the inequality~\eqref{weightedaverage} by considering the latency contributions of vertices in each of the 5 different parts $B_1, B_2, B_3, C_z^i, C_z^j$.

We define $l_w:=\lat_{\tau}^{B\cup \{w\}}(w)$ for all $w\in V$ and 
\begin{equation}\label{eq:consec-Tij}
 T_i:=\lat_{\tau}^{B\cup C_z^i}(w)\quad \forall w\in C_z^i, \quad \mbox{and} \quad T_j:=\lat_{\tau}^{B\cup C_z^j}(w)\quad \forall w\in C_z^j.
\end{equation}

Basically $T_i$ (resp. $T_j$) is the length of the path in $\tau$ from the root to any vertex in $C_z^i$ (resp. $C_z^j$) when the active vertices are $B\cup C_z^i$ (resp. $B\cup C_z^j$). Note that $T_j\ge T_i$ by triangle inequality. Also, let $L_{\pi}^B(w)$ be the expected latency of any vertex $w$ for any tour $\pi \in \{\tau, \tau_i, \tau_j\}$ conditioned on the event $\e_B$. More formally:
$$L_{\pi}^B(w) = \sum_{C\subseteq U}p(C,U)\lat_{\pi}^{B\cup C}(w),\qquad \forall w\in V\;.$$
Finally, defining the following terms will help us simplify  our notation:
\begin{equation}\label{eq:consec-Di}
\Delta_i := \lat_{\tau}^{B\cup C_{z}^i}(w) - \lat_{\tau}^B(w) = \lat_{\tau}^{B\cup C_{z}^i}(w) - l_w\hspace{1cm}\forall w\in B_2\cup B_3\;.\end{equation}

\begin{equation}\label{eq:consec-Dj}
\Delta_j := \lat_{\tau}^{B\cup C_{z}^j}(w) - \lat_{\tau}^B(w) = \lat_{\tau}^{B\cup C_{z}^j}(w) -l_w \hspace{1cm}\forall w\in B_3.
\end{equation}
Note that $\Delta_i$ (resp. $\Delta_j$) corresponds to  the increase in latency (conditioned on $\e_B$) of any vertex appearing after $C_{z}^i$ (resp. $C_{z}^j$) if some vertex in $C_{z}^i$ (resp. $C_{z}^j$) is active. 
Note that the right hand side in \eqref{eq:consec-Di} is the same for any $w$ in the given set and as a result independent of $w$; the same observation is true for \eqref{eq:consec-Dj}. Moreover, by triangle inequality, having a superset of active vertices can only increase the latency of any vertex: so $\Delta_i$ and $\Delta_j$ are non-negative.

Table~\ref{latencies} lists the expected latency of vertices in each of the five   different parts, conditioned on $\e_B$. 
We use  $\alpha_i = 1-{(1-p)}^{k_i}$ and $\alpha_j = 1-{(1-p)}^{k_j}$ as the probabilities of having at least one active vertex in parts $C_z^i$ and $C_z^j$ respectively. 

We first prove the lemma assuming the entries stated in the table. Then  we explain why each of these table entries  is correct, which would complete the proof. 

\begin{center}\begin{table}[h]
\begin{tabular}{ |c|c|c|c|c|c|c| }
 \hline
\backslashbox{Tour $\pi$}{Type} & $B_1$ & $B_2$ & $B_3$ & $C_z^i$ & $C_z^j$ \\
 \hline
$\tau$  & $l_w$  & $l_w + \Delta_i\alpha_i$
& $l_w + \Delta_i \alpha_i+\Delta_j \alpha_j$&$T_ip$&$T_jp + \Delta_i\alpha_ip$\\ \hline
$\tau_i$ & $l_w$ & $l_w + \Delta_i(\alpha_i+\alpha_j - \alpha_i\alpha_j)$& $l_w + \Delta_i(\alpha_i+\alpha_j - \alpha_i\alpha_j)$&$T_ip$&$T_ip$\\  \hline
$\tau_j$ & $l_w$& $l_w$& $l_w + \Delta_j(\alpha_i+\alpha_j - \alpha_i\alpha_j)$&$T_jp$&$T_jp$\\ \hline
\end{tabular}
\caption{The values of $L_{\pi}^B(w)$ for $w \in B_1\cup B_2\cup B_3\cup C_z^i\cup C_z^j$, and $\pi\in\{\tau, \tau_i, \tau_j\}$}\label{latencies}
\end{table}
\end{center}

\subsubsection{Completing  proof of Lemma~\ref{lem:visitingconsec} using Table~\ref{latencies}.} We now prove \eqref{weightedaverage} for a suitable choice of $\lambda\in [0,1]$. The value  $\lambda$ will not depend on the subset $B$: so (as discussed before) we can take an expectation over $B$ to complete the proof of the lemma.  

Choosing any $\lambda$ such that $\lambda\le \frac{\alpha_i}{\alpha_i+\alpha_j - \alpha_i \alpha_j}$ and $1-\lambda\le  \frac{\alpha_j}{\alpha_i+\alpha_j - \alpha_i \alpha_j}$, it follows from the first three columns of Table~\ref{latencies} (for $B_1$, $B_2$ and $B_3$) that:
\begin{equation}
\label{eq:consec-table1}
L^B_\tau(w) \ge \lambda \cdot L^B_{\tau_i}(w) + (1-\lambda) \cdot   L^B_{\tau_j}(w) , \qquad \forall w\in B.
\end{equation}

Next we show that the {\em total} latency contribution from $U$ satisfies a similar inequality: 
\begin{equation}
\label{eq:consec-table2}
\sum_{w\in U} L^B_\tau(w)    \ge \lambda \cdot \sum_{w\in U} L^B_{\tau_i}(w)+ 
(1-\lambda) \cdot \sum_{w\in U} L^B_{\tau_j}(w).
\end{equation}
To see this, note from the last two columns of the table that 
$$\sum_{w\in U} L^B_\tau(w) \ge  k_i\cdot T_i p + k_j\cdot T_j p, \quad \sum_{w\in U} L^B_{\tau_i}(w) = (k_i+k_j) T_i p, \quad \sum_{w\in U} L^B_{\tau_j}(w) = (k_i+k_j) T_j p\;.$$
So, to prove \eqref{eq:consec-table2} it suffices to show $k_i  T_i p + k_j  T_j p \, \ge\, (k_i+k_j) (\lambda T_i + (1-\lambda)T_j) p$. Using the fact that $T_i\le T_j$, it suffices to show $k_j \ge (k_i+k_j)(1-\lambda)$. In other words, choosing $\lambda$ such that $1-\lambda\le \frac{k_j}{k_i+k_j}$, we would obtain \eqref{eq:consec-table2}. 

Finally, adding the inequalities \eqref{eq:consec-table1} and \eqref{eq:consec-table2} (which account for the latency contribution from all active vertices) we would obtain \eqref{weightedaverage}. We only need to ensure that there is some choice for $\lambda$ satisfying the conditions we assumed, namely:
$$\lambda\le \frac{\alpha_i}{\alpha_i+\alpha_j - \alpha_i \alpha_j}, \quad  1-\lambda\le  \frac{\alpha_j}{\alpha_i+\alpha_j - \alpha_i \alpha_j}, \quad \mbox{and}\quad  1-\lambda\le \frac{k_j}{k_i+k_j}\;.$$
It can be verified directly that  $\lambda=\frac{1-(1-p)^{k_i}}{1-(1-p)^{k_i+k_j}}$ satisfies these conditions (see Appendix~\ref{app:lambda}).

\subsubsection{Obtaining the entries in Table~\ref{latencies}.} Below we consider each vertex-type separately. 
\paragraph{Vertices $w\in B_1$.}
By construction of $\tau_i$ and $\tau_j$ it is obvious that $\tau, \tau_i$ and $\tau_j$ are identical until visiting any $w\in B_1$. So for any $C\sse U$ and $\pi \in \{\tau, \tau_i, \tau_j\}$ we have $\lat_{\pi}^{B\cup C}(w) = \lat_{\tau}^B(w)=\lat_{\tau}^{B\cup \{w\}}(w)=l_w$. 
This means that $L_{\pi}^B(w) = l_w$ for all $\pi \in \{\tau, \tau_i, \tau_j\}$.

\paragraph{Vertices $w\in B_2$.}
Consider first tour $\tau$. Note that if there is at least one active vertex in $C_z^i$ (which happens with probability $\alpha_i$) then the latency of any $w\in B_2$ will be $\lat_{\tau}^{B\cup C_{z}^i}(w)$. However, if all vertices in $C_z^i$ are inactive (which happens with probability $1-\alpha_i$) then the latency of $w$   would be  $\lat_{\tau}^{B}(w)$. Now using \eqref{eq:consec-Di}  we have:
$$L_{\tau}^{B}(w) = \lat_{\tau}^{B\cup C_{z}^i}(w) \cdot \alpha_i + l_w\cdot (1-\alpha_i) = (l_w+\Delta_i)\cdot  \alpha_i  + l_w\cdot (1-\alpha_i) =l_w + \Delta_i \alpha_i\;.$$

Now, we can use a similar logic for $\tau_i$. Here, if there is any active vertex in $U=C_z^i \cup C_z^j$ (with probability $\alpha_i + \alpha_j - \alpha_i \alpha_j$) the  latency of $w$ is $\lat_{\tau_i}^{B\cup U}(w)$, and if all of $U$ is inactive the latency is $l_w$. Note that by definition of $\tau$ and $\tau_i$ and the fact that all vertices in $C_z^i$ appear consecutively on both tours, $\lat_{\tau_i}^{B\cup U}(w) = \lat_{\tau_i}^{B\cup C_{z}^i}(w) = \lat_{\tau}^{B\cup C_{z}^i}(w)$. So we have   
$L_{\tau_i}^{B} = l_w + \Delta_i (\alpha_i + \alpha_j -\alpha_i\alpha_j)$.

Finally, by definition of $\tau_j$  we have $\lat_{\tau_j}^{B\cup C}(w) = \lat_{\tau}^{B}(w) = l_w$ for any $C\sse U$. So  $L_{\tau_j}^B(w) = l_w$.

\paragraph{Vertices $w\in B_3$.} Consider first tour $\tau$. The latency of such a vertex $w$ is:
\begin{itemize}
\item $l_w$ if all of $U=C_z^i\cup C_z^j$ is inactive,
\item $\lat_{\tau}^{B\cup C_z^i}(w)$ if some vertex in $C_z^i$ is active and all of $C_z^j$ is inactive,
\item $\lat_{\tau}^{B\cup C_z^j}(w)$ if some vertex in $C_z^j$ is active and all of $C_z^i$ is inactive, and
\item $\lat_{\tau}^{B\cup C_z^i\cup C_z^j}(w)$ if some vertex in $C_z^i$ and some vertex in $C_z^j$ are active.
\end{itemize}
Therefore, we can write:
$$L_{\tau}^{B}(w) = l_w(1-\alpha_i)(1-\alpha_j) +  \lat_{\tau}^{B\cup C_z^i}(w)\alpha_i(1-\alpha_j)+  \lat_{\tau}^{B\cup C_z^j}(w)\alpha_j(1-\alpha_i) + \lat_{\tau}^{B\cup U}(w)\alpha_i\alpha_j\;.$$
From~\eqref{eq:consec-Di} and \eqref{eq:consec-Dj} we have   $\lat_{\tau}^{B\cup C_z^i} = l_w + \Delta_i$ and $\lat_{\tau}^{B\cup C_z^j}=l_w + \Delta_j$. 
Also, since  we assumed that $B_2\neq \emptyset$, we have $\lat_{\tau}^{B\cup U} = l_w + \Delta_j  + \Delta_i$. Combined with the above equation, 
$$L_{\tau}^{B}(w) = l_w +\Delta_i\alpha_i +\Delta_j\alpha_j\;.$$

Now for tour $\tau_i$ the latency would be equal to $\lat_{\tau}^{B\cup C_z^i}(w) = l_w + \Delta_i$ if there is at least one active vertex among $U$ which happens with probability $\alpha_i + \alpha_j -\alpha_i\alpha_j$. Otherwise it would be just $l_w$.
So  $L_{\tau_i}^{B} =  l_w + \Delta_i(\alpha_i + \alpha_j -\alpha_i\alpha_j)$. 
Similarly, for tour $\tau_j$ we have  $L_{\tau_j}^{B} = l_w + \Delta_j(\alpha_i + \alpha_j -\alpha_i\alpha_j) $.

\paragraph{Vertices $w \in C_z^i$.}
We  start with tour $\tau$. If $w\notin C$ then $\lat_{\tau}^{B\cup C}(w) = 0$.
Otherwise, $w$ is active and using \eqref{eq:consec-Tij}  we have $\lat_{\tau}^{B\cup C}(w) = \lat_{\tau}^{B\cup C_z^i}(w) = T_i$.  So  $L_{\tau}^B(w) = T_i p$.

As $C_z^i$ appears in the same position in tours $\tau$ and $\tau_i$,   we also have   $L_{\tau_i}^{B}(w) = T_i p$. 

In tour $\tau_j$, part $C_z^i$ has moved to the position of part $C^j_z$ in $\tau$. Here, when $w\in C$ we have $\lat_{\tau_j}^{B\cup C}(w) =  T_j$. So  $L_{\tau_j}^{B}(w) = T_j p$.

\paragraph{Vertices $w \in C_z^j$.} As in the  previous case, we have $L_{\tau_i}^B(w) = T_ip$ and $L_{\tau_j}^B(w) = T_jp$. 

Now, consider tour $\tau$. First note that if $w\notin C$, $\lat_{\tau}^{B\cup C}(w) = 0$. Below we consider the cases that $w$ is active, which happens with probability $p$. If there is at least one active vertex in $C_z^i$ (which happens independently with probability $\alpha_i$) we have $\lat_{\tau}^{B\cup C}(w) = l_w + \Delta_i = T_j + \Delta_i$. And if there is no active vertex in $C_z^i$ (with probability $1-\alpha_i$), then we have $\lat_{\tau}^{B\cup C}(w) = l_w = T_j$. So 
$$L_{\tau}^B(w) = p\alpha_i \cdot (T_j + \Delta_i) + p(1- \alpha_i) \cdot T_j = T_jp + \Delta_i \alpha_ip\;.$$

This completes the proof of all cases in Table~\ref{latencies}, and hence Lemma~\ref{lem:visitingconsec}.

\end{proof}

\paragraph{Acknowledgement}
V. Nagarajan and F. Navidi were supported in part by NSF CAREER grant CCF-1750127. I.~L.~G{\o}rtz was supported in part by the Danish Research Council grant DFF -- 1323-00178.

\bibliographystyle{plain}

\begin{appendices}

\section{Choice of $\lambda$ in proof of Lemma~\ref{lem:visitingconsec}} \label{app:lambda}
Here we show that  $\lambda=\frac{1-(1-p)^{k_i}}{1-(1-p)^{k_i+k_j}}$ (where $0\le p\le 1$) satisfies the following inequalities:
\begin{gather}
\lambda\le \frac{\alpha_i}{\alpha_i+\alpha_j - \alpha_i \alpha_j}\label{lambda:alpha:i}\\
1-\lambda\le  \frac{\alpha_j}{\alpha_i+\alpha_j - \alpha_i \alpha_j} \label{lambda:alpha:j}\\
1-\lambda\le \frac{k_j}{k_i+k_j}\label{lambda:k:j}
\end{gather}
where $\alpha_i = 1-{(1-p)}^{k_i}$ and $\alpha_j = 1-{(1-p)}^{k_j}$.

We define function $f(k) = 1-(1-p)^k$. Then we can write:
$$\lambda = \frac{f(k_i)}{f(k_i + k_j)},\quad\alpha_i = f(k_i), \quad \alpha_j = f(k_j)$$
Clearly, 
\begin{equation}
f(k_i+k_j) = f(k_i) + f(k_j) - f(k_i)f(k_j)\label{inc:exc}
\end{equation}
Now, we can re-write inequality~\eqref{lambda:alpha:i} as:
$$\frac{f(k_i)}{f(k_i + k_j)} \leq \frac{f(k_i)}{f(k_i) + f(k_j) - f(k_i)f(k_j)}$$
which is true by equation~\eqref{inc:exc}.

For inequality~\eqref{lambda:alpha:j}, we rewrite it as:
$$1-\frac{f(k_i)}{f(k_i + k_j)} \leq \frac{f(k_j)}{f(k_i) + f(k_j) - f(k_i)f(k_j)} = \frac{f(k_j)}{f(k_i+k_j)}$$
$$\Leftrightarrow\quad  f(k_i+k_j)\leq f(k_i) + f(k_j)\;,$$
which is true by~\eqref{inc:exc} and the fact that $f(k)\geq 0$ for every $k$.

It  remains to show the correctness of inequality~\eqref{lambda:k:j} which can be written as:
$$\frac{f(k_i)}{f(k_i + k_j)}\geq\frac{k_i}{k_i + k_j} \quad \Leftrightarrow\quad   \frac{f(k_i)}{k_i}\geq\frac{f(k_i + k_j)}{k_i + k_j}\;.$$
So it is enough to show that $g(k) = \frac{f(k)}{k}$ is decreasing, or equivalently $g'(k) \leq 0$. We can write:
$$g'(k) = \frac{kf'(k)-f(k)}{k^2} = \frac{(1-p)^k(1-k\log(1-p))-1 }{k^2} \le \frac{(1-p)^k\cdot e^{-k\log(1-p)}-1 }{k^2} = 0\;.$$
Above we used the inequality $1+x\le e^x$ for all real $x$. 

\end{appendices}

\end{document}